\DeclareTextFontCommand{\emph}{\slshape}
\renewcommand{\paragraph}{%
	\@startsection{paragraph}{4}%
	{\z@}{1.75ex \@plus 1ex \@minus .2ex}{-0.7em}%
	{\normalfont\normalsize\bfseries}%
}
\let\originalleft\left
\let\originalright\right
\renewcommand{\left}{\mathopen{}\mathclose\bgroup\originalleft}
\renewcommand{\right}{\aftergroup\egroup\originalright}
\pgfplotsset{compat=1.10}
\setlist[enumerate,1]{label=(\arabic*)}
\setlist[itemize,1]{label=--}
\setlist[itemize,2]{label=--}
\setlist[itemize,3]{label=--}
\setlist[itemize,4]{label=--}
\theoremstyle{definition}
\newtheorem{proposition}{Proposition}%[section]
\newtheorem{lemma}{Lemma}%[section]
\newtheorem{corollary}{Corollary}%[section]
\newtheorem{remark}{Remark}%[section]
\newtheorem{observation}{Observation}%[section]
\newtheorem{example}{Example}%[section]
\newtheorem{definition}{Definition}%[section]
\newtheorem{axiom}{Axiom}%[section]
\newtheoremstyle{named}
	{\topsep}					% ABOVESPACE
	{\topsep}					% BELOWSPACE
	{}							% BODYFONT
	{0pt}						% INDENT (empty value is the same as 0pt)
	{\bfseries}					% HEADFONT
	{}							% HEADPUNCT
	{5pt plus 1pt minus 1pt}	% HEADSPACE
	{\thmnote{#3}}				% CUSTOM-HEAD-SPEC
\theoremstyle{named}
\newtheorem{namedthm}{}
\xpatchcmd{\proof}{\itshape}{\proofheadfont}{}{}
\newcommand{\proofheadfont}{\slshape}
\crefname{page}{p.}{pp.}
\crefname{equation}{equation}{equations}
\crefname{section}{section}{sections}
\crefname{subsection}{section}{sections}
\crefname{subsubsection}{section}{sections}
\crefname{appsec}{appendix}{appendices}
\crefname{supplsec}{supplemental appendix}{supplemental appendices}
\crefname{footnote}{footnote}{footnotes}
\crefname{figure}{figure}{figures}
\crefname{table}{table}{tables}
\crefname{theorem}{theorem}{theorems}
\crefname{proposition}{proposition}{propositions}
\crefname{lemma}{lemma}{lemmata}
\crefname{corollary}{corollary}{corollaries}
\crefname{remark}{remark}{remarks}
\crefname{observation}{observation}{observations}
\crefname{example}{example}{examples}
\crefname{fact}{fact}{facts}
\crefname{definition}{definition}{definitions}
\crefname{assumption}{assumption}{assumptions}
\crefname{exercise}{exercise}{exercises}
\crefname{notation}{notation}{notation}
\crefname{claim}{claim}{claims}
\crefname{conjecture}{conjecture}{conjectures}
\crefname{axiom}{axiom}{axioms}
\newcommand{\dd}{\mathrm{d}}
\newcommand{\E}{\mathbf{E}}
\newcommand{\R}{\mathbf{R}}
\newcommand{\N}{\mathbf{N}}
\DeclarePairedDelimiter\abs{\lvert}{\rvert}
\newcommand*{\xslant}[2][76]{%
	\begingroup
	\sbox0{#2}%
	\pgfmathsetlengthmacro\wdslant{\the\wd0 + cos(#1)*\the\wd0}%
	\leavevmode
	\hbox to \wdslant{\hss
		\tikz[
			baseline=(X.base),
			inner sep=0pt,
			transform canvas={xslant=cos(#1)},
		] \node (X) {\usebox0};%
		\hss
		\vrule width 0pt height\ht0 depth\dp0 %
	}%
	\endgroup
}
\newcommand*{\xslantmath}{}
\def\xslantmath#1#{%
	\@xslantmath{#1}%
}
\newcommand*{\@xslantmath}[2]{%
	% #1: optional argument for \xslant including brackets
	% #2: math symbol
	\ensuremath{%
		\mathpalette{\@@xslantmath{#1}}{#2}%
	}%
}
\newcommand*{\@@xslantmath}[3]{%
	% #1: optional argument for \xslant including brackets
	% #2: math style
	% #3: math symbol
	\xslant#1{$#2#3\m@th$}%
}
\def\namedlabel#1#2{\begingroup
	#2%
	\def\@currentlabel{#2}%
	\phantomsection\label{#1}\endgroup
}
\let\save@mathaccent\mathaccent
\newcommand*\if@single[3]{%
	\setbox0\hbox{${\mathaccent"0362{#1}}^H$}%
	\setbox2\hbox{${\mathaccent"0362{\kern0pt#1}}^H$}%
	\ifdim\ht0=\ht2 #3\else #2\fi
	}
\newcommand*\rel@kern[1]{\kern#1\dimexpr\macc@kerna}
\newcommand*\widebar[1]{\@ifnextchar^{{\wide@bar{#1}{0}}}{\wide@bar{#1}{1}}}
\newcommand*\wide@bar[2]{\if@single{#1}{\wide@bar@{#1}{#2}{1}}{\wide@bar@{#1}{#2}{2}}}
\newcommand*\wide@bar@[3]{%
	\begingroup
	\def\mathaccent##1##2{%
%Enable nesting of accents:
	  \let\mathaccent\save@mathaccent
%If there's more than a single symbol, use the first character instead (see below):
	  \if#32 \let\macc@nucleus\first@char \fi
%Determine the italic correction:
	  \setbox\z@\hbox{$\macc@style{\macc@nucleus}_{}$}%
	  \setbox\tw@\hbox{$\macc@style{\macc@nucleus}{}_{}$}%
	  \dimen@\wd\tw@
	  \advance\dimen@-\wd\z@
%Now \dimen@ is the italic correction of the symbol.
	  \divide\dimen@ 3
	  \@tempdima\wd\tw@
	  \advance\@tempdima-\scriptspace
%Now \@tempdima is the width of the symbol.
	  \divide\@tempdima 10
	  \advance\dimen@-\@tempdima
%Now \dimen@ = (italic correction / 3) - (Breite / 10)
	  \ifdim\dimen@>\z@ \dimen@0pt\fi
%The bar will be shortened in the case \dimen@<0 !
	  \rel@kern{0.6}\kern-\dimen@
	  \if#31
	    \overline{\rel@kern{-0.6}\kern\dimen@\macc@nucleus\rel@kern{0.4}\kern\dimen@}%
	    \advance\dimen@0.4\dimexpr\macc@kerna
%Place the combined final kern (-\dimen@) if it is >0 or if a superscript follows:
	    \let\final@kern#2%
	    \ifdim\dimen@<\z@ \let\final@kern1\fi
	    \if\final@kern1 \kern-\dimen@\fi
	  \else
	    \overline{\rel@kern{-0.6}\kern\dimen@#1}%
	  \fi
	}%
	\macc@depth\@ne
	\let\math@bgroup\@empty \let\math@egroup\macc@set@skewchar
	\mathsurround\z@ \frozen@everymath{\mathgroup\macc@group\relax}%
	\macc@set@skewchar\relax
	\let\mathaccentV\macc@nested@a
%The following initialises \macc@kerna and calls \mathaccent:
	\if#31
	  \macc@nested@a\relax111{#1}%
	\else
%If the argument consists of more than one symbol, and if the first token is
%a letter, use that letter for the computations:
	  \def\gobble@till@marker##1\endmarker{}%
	  \futurelet\first@char\gobble@till@marker#1\endmarker
	  \ifcat\noexpand\first@char A\else
	    \def\first@char{}%
	  \fi
	  \macc@nested@a\relax111{\first@char}%
	\fi
	\endgroup
}
	\newcommand{\hyperdest}[1]{\Hy@raisedlink{\hypertarget{#1}{}}}
\title{\scshape Optimism, overconfidence,\\and moral hazard%
\thanks{This paper grew out of conversations with Jorge Padilla and Joe Perkins, whom I thank for their insights. I am grateful for comments from three anonymous referees, Johannes Abeler, Da\u{g}han Carlos Akkar, Nemanja Antić, Ala Avoyan, Jean Baccelli, Aislinn Bohren, Roberto Corrao, Gregorio Curello, Eddie Dekel, Matteo Escudé, Bruno Furtado, George Georgiadis, Duarte Gonçalves, Tai-Wei Hu, Alex Jakobsen, Ian Jewitt, Jan Knoepfle, Meg Meyer, Salvatore Piccolo, Mauricio Ribeiro, Evgenii Safonov, Todd Sarver, Lorenzo Stanca, Quitzé Valenzuela-Stookey, Mu Zhang and audiences at Compass Lexecon and the 2nd Southeast Theory Festival. This research was supported by Compass Lexecon. An earlier version of this paper bore the title `Optimism and overconfidence'.}}
\author{Ludvig Sinander\\University of Oxford}
\date{24 May 2024}
\begin{document}

\maketitle

%%%%%%%%%%%%%%%%%%%%%%
%%%%%%%%%%%%%%%%%%%%%%
\begin{abstract}
	I revisit the standard moral-hazard model, in which an agent's preference over contracts is rooted in costly effort choice. I characterise the behavioural content of the model in terms of empirically testable axioms, and show that the model's parameters are identified. I propose general behavioural definitions of relative (over)confidence and optimism, and characterise these in terms of the parameters of the moral-hazard model. My formal results are rooted in a simple but powerful insight: that the moral-hazard model is closely related to the well-known `variational' model of choice under uncertainty.

	\vspace{0.8em}

	\noindent \emph{Keywords:\, overconfidence, optimism, moral hazard, axioms.}

	\noindent \emph{JEL codes:\, D81, D86, D91}
\end{abstract}
%%%%%%%%%%%%%%%%%%%%%%
%%%%%%%%%%%%%%%%%%%%%%

%%%%%%%%%%%%%%%%%%%%%%
%%%%%%%%%%%%%%%%%%%%%%
\section{Introduction}
\label{sec:intro}
%%%%%%%%%%%%%%%%%%%%%%
%%%%%%%%%%%%%%%%%%%%%%

A vast literature in psychology and economics documents the prevalence of wrong beliefs.%
	\footnote{See the survey by \textcite[][section 3.1]{Dellavigna2009}, for example.}
In this paper, I seek to define, distinguish and characterise two oft-conflated senses in which beliefs can be `wrong': \emph{overconfidence} and \emph{optimism.} An agent is \emph{overconfident} if she overestimates her ability to influence (the distribution of) a payoff-relevant outcome, which I shall call `output'. By contrast, she is \emph{optimistic} if her expectation of the distribution of output is unrealistically high, in the sense of first-order stochastic dominance. In simple parametric models, these two concepts typically admit natural definitions in terms of the parameters.%
	\footnote{See \textcite{Spinnewijn2015}, for example.}
In this paper, I provide general, model-independent definitions of overconfidence and optimism in terms of choice behaviour.

I explicate the content of my definitions by characterising their meaning in the standard moral-hazard model. This model is the natural benchmark because it represents the canonical formalism in economics for studying how an agent may influence the distribution of an observable outcome (`output'). The question of how overconfident and optimistic behaviour manifest in the moral-hazard model raises a more basic question: how does this model relate to choice behaviour? I answer this question by characterising what restrictions the moral-hazard model imposes on choice data, and delineating the extent to which its parameters may be recovered from such data.

The environment consists of a finite set $S$ of possible output levels and a convex set $\Pi \subseteq \R$ of possible monetary rewards. A \emph{contract} $w : S \to \Delta(\Pi)$ specifies the agent's (possibly random) remuneration as a function of output. In the \emph{moral-hazard model,} the agent influences the distribution $P_e \in \Delta(S)$ of output by choosing `effort' $e \in E$, at a cost $C(e) \geq 0$. She chooses effort optimally, randomising if desired (by selecting an effort distribution $\mu \in \Delta(E)$). The agent's valuation of a contract $w : S \to \Delta(\Pi)$ is therefore
\begin{equation*}
	\sup_{\mu \in \Delta(E)} \int_E \left[
	-C(e)
	+ \sum_{s \in S} u(w(s)) P_e(s)
	\right]
	\mu(\dd e) ,
\end{equation*}
where the utility function $u : \Pi \to \R$ describes her risk attitude.%
	\footnote{For a random remuneration $x \in \Delta(\Pi)$, $u(x)$ denotes the \emph{expected} utility $\int_\Pi u(\pi) x(\dd \pi)$.}

I first ask what the moral-hazard model's empirical content is: what testable restrictions does this model impose on contract-choice data? In other words, which preference relations $\succeq$ on the set of all contracts admit a moral-hazard representation? I give the answer in terms of six axioms, which together exhaust the testable implications of the moral-hazard model.

I then show that the moral-hazard model's parameters are to a significant extent identified off contract-choice data. In particular, the agent's utility function $u$ is identified (up to positive affine transformations), and so is her minimum cost of producing any given output distribution $p \in \Delta(S)$: that is, the quantity
\begin{equation}
	c(p) \coloneqq
	\inf_{\substack{\mu \in \Delta(E) :\\ \int_E P_e \mu(\dd e) = p}} \int_E C(e) \mu(\dd e) 
	\qquad \text{for each $p \in \Delta(S)$.}
	\tag{$\star$}
	\label{eq:c}
\end{equation}

I next define relative confidence, motivated by the idea that a confident agent is one who believes that she can significantly influence output. Specifically, I call one preference $\succeq$ \emph{more confident than} another preference $\succeq'$ if whenever $\succeq'$ chooses a contract $w$ over a \emph{constant} contract $x$ (one under which pay does not vary with output), $\succeq$ also chooses $w$ over $x$. I show that for moral-hazard preferences, confidence shifts are equivalent to \emph{vertical shifts} of the output-distribution cost $c$ defined in \eqref{eq:c}: greater confidence means precisely a pointwise lower $c$ function (and an unchanged utility function $u$).

Finally, I define relative optimism, building on the idea that an optimistic agent is one who believes that output is likely to be high. In the simplest case, for two preferences $\succeq$ and $\succeq'$ that both have a risk-neutral expected-utility attitude to risk, I call $\succeq$ \emph{more optimistic than} $\succeq'$ if and only if whenever $\succeq'$ chooses a contract $w$ over another contract $w'$ that is \emph{less steep,} in the sense that $s \mapsto \E(w(s))-\E(w'(s))$ is increasing, $\succeq$ also chooses $w$ over $w'$. The general definition is similar, but adjusts for risk attitude. I show that in the moral-hazard model, optimism shifts correspond to \emph{horizontal shifts} of the output-distribution cost $c$ in \eqref{eq:c}: greater optimism means a shift of the $c$ function toward the first-order stochastically higher distributions (and no shift of the utility function $u$).

My definitions of relative confidence and optimism are purely behavioural, making no reference to objective facts about the agent's \emph{actual} ability to influence output. \emph{If} an objective description of the agent's influence is given, then I may additionally define what it means for a preference to be (absolutely) `overconfident' or `optimistic': namely, `more confident than objectively warranted' and `more optimistic than objectively warranted'.

The formal results in this paper are rooted in a simple but powerful insight: that the moral-hazard model is closely related to the `variational' model of choice under uncertainty, which was introduced and axiomatised by \textcite{MaccheroniMarinacciRustichini2006}. This close relationship allows me to leverage known results about the variational model to obtain my four main results about the moral-hazard model (axiomatisation, identification, characterisation of `more confident than', and characterisation of `more optimistic than'). Specifically, I apply results from \textcite{MaccheroniMarinacciRustichini2006} and \textcite{DziewulskiQuah2024}.

%%%%%%%%%%%%%%%%%%%%%%%%%%%%%%%%%%%
\subsection{Related literature}
\label{sec:intro:lit}
%%%%%%%%%%%%%%%%%%%%%%%%%%%%%%%%%%%

The moral-hazard model is the focus of a large literature, beginning with \textcite{Mirrlees1999,Holmstrom1979}.%
	\footnote{For an overview, see \textcite{Holmstrom2017,Georgiadis2022}.}
	% \footnote{Precedents include \textcite{Wilson1969,SpenceZeckhauser1971,Ross1972,Stiglitz1975}.}
I contribute to the `behavioural' strand of this literature, which examines, among other things, how greater confidence or optimism on the part of the agent shapes optimal contracting.%
	\footnote{\label{footnote:mh_opt}For example, \textcite{Dellarosa2011,GervaisHeatonOdean2011,Spinnewijn2015}. See \textcite[section 3.4]{Koszegi2014} for a survey.}
This literature takes the moral-hazard model as given and studies its implications for optimal contracts; in this paper, I provide behavioural foundations for the model itself, as well as for overconfidence and optimism.

My characterisation of the testable implications of the moral-hazard model is designed to respect the model's two defining features: that effort is \emph{costly,} and that it is \emph{unobservable.} This distinguishes my analysis from previous work on the behavioural content of the moral-hazard model, which has either confined attention to the special case of costless effort \parencite{Dreze1961,Dreze1987}, or else assumed that effort is observable \parencite{Karni2006}.

I also contribute to the theoretical literature on behavioural economics, which studies models of individual decision-making that can accommodate various departures from `standard' economic behaviour that have been documented in the lab (or, sometimes, in the field). One strand of this literature examines `behavioural' models from the axiomatic perspective of decision theory;%
	\footnote{Examples include disappointment-aversion \parencite{Gul1991}, temptation \parencite{GulPesendorfer2001,DekelLipmanRustichini2009}, status-quo bias \parencite{MasatliogluOk2005,Ortoleva2010}, framing effects \parencite{SalantRubinstein2006,SalantRubinstein2008,AhnErgin2010}, heuristics \parencite{ManziniMariotti2007,ManziniMariotti2012}, regret-aversion \parencite{Sarver2008}, costly contemplation \parencite{ErginSarver2010}, and reference-dependence \parencite{OkOrtolevaRiella2015}.}
another strand is concerned with overconfidence and optimism.%
	\footnote{Topics include the impact of overconfidence or optimism on credit markets \parencite{ManovePadilla1999}, on competitive screening \parencite{SandroniSquintani2007}, on monopolistic screening \parencite{EliazSpiegler2008,Grubb2009}, on moral hazard (see \cref{footnote:mh_opt}), on political behaviour \parencite{OrtolevaSnowberg2015} and on learning \parencite{HeidhuesKoszegiStrack2018}. See also \textcite{BenoitDubra2011}.}
This paper marries these two strands. Relative to other work in that general spirit,%
	\footnote{E.g. \textcite{Wakker1990,EpsteinKopylov2007,ChateauneufEichbergerGrant2007,MooreHealy2008,DeanOrtoleva2017,DillenbergerPostlewaiteRozen2017}.}
I give new and simple definitions that distinguish overconfidence from optimism, and cash out their meaning in the moral-hazard model.

Finally, at a high level, this paper provides axiomatic behavioural foundations for a standard economic model (namely, the moral-hazard model). Other work in this spirit includes axiomatisations of Blackwell's (\citeyear{Blackwell1951,Blackwell1953}) model of choice informed by a signal,%
	\footnote{\cite{AzrieliLehrer2008}; \cite{DillenbergerEtal2014}; \cite{Lu2016}.}
of the canonical model of costly information acquisition,%
	\footnote{\cite{Vanzandt1996}; \cite{DeoliveiraEtal2017}; \cite{Ellis2018}.}
and of Kamenica and Gentzkow's (\citeyear{KamenicaGentzkow2011}) `Bayesian persuasion' model \parencite{Jakobsen2021,Jakobsen2024}.

%%%%%%%%%%%%%%%%%%%%%%%%%%%%%%%%%%%
\subsection{Roadmap}
\label{sec:intro:roadmap}
%%%%%%%%%%%%%%%%%%%%%%%%%%%%%%%%%%%

The rest of this paper is arranged as follows. In the next section, I describe the environment and the moral-hazard model. I introduce a canonical \emph{parsimonious} moral-hazard model in \cref{sec:parsim}. In \cref{sec:axioms}, I characterise the moral-hazard model in terms of six empirically testable axioms. I establish the model's identification properties in \cref{sec:ident}. Finally, in \cref{sec:conf_optim}, I propose behavioural definitions of `more confident than' and `more optimistic than', and characterise their meaning in the moral-hazard model.

%%%%%%%%%%%%%%%%%%%%%%
%%%%%%%%%%%%%%%%%%%%%%
\section{Setting}
\label{sec:model}
%%%%%%%%%%%%%%%%%%%%%%
%%%%%%%%%%%%%%%%%%%%%%

There is one agent, and a convex set $\Pi \subseteq \R$ of possible levels of monetary remuneration (`prizes'). The agent's pay $\pi \in \Pi$ can be made contingent on the realisation of a contractible signal. We follow convention by calling this signal `output'. The set $S$ of possible output realisations is assumed to be non-empty and, for simplicity, finite.

Let $\Delta(\Pi)$ be the set of all finite-support probability distributions over monetary prizes $\Pi$; we call each $x \in \Delta(\Pi)$ a \emph{random remuneration.} A \emph{contract} is a map $w : S \to \Delta(\Pi)$. The interpretation is that if realised output is $s \in S$, then the agent is paid a random amount drawn from the distribution $w(s) \in \Delta(\Pi)$. We write $W$ for the set of all contracts.

%%%%%%%%%%%%%%%%%%%%%%%%%%%%%%%%%%%
\subsection{Definitions and conventions}
\label{sec:model:prelim}
%%%%%%%%%%%%%%%%%%%%%%%%%%%%%%%%%%%

A contract $w \in W$ is \emph{constant} if and only if $w(s) = w(s')$ for all output levels $s,s' \in S$. As is standard, we identify each constant contract (an element of $W$) with the random remuneration at which it is constant (an element of $\Delta(\Pi)$). Similarly, we identify each degenerate random remuneration (an element of $\Delta(\Pi)$) with the prize at which it is degenerate (an element of $\Pi$).

Any (utility) function $u : \Pi \to \R$ defined on monetary prizes $\Pi$ extends naturally to a linear (expected-utility) function $u : \Delta(\Pi) \to \R$ defined on random remunerations, via
\begin{equation*}
	u(x) \coloneqq \int_\Pi u(\pi) x(\dd \pi)
	\quad \text{for every $x \in \Delta(\Pi)$.}
\end{equation*}
Throughout the paper, we hold fixed an arbitrary pair $\pi_0 < \pi_1$ of `reference prizes' in $\Pi$, and we call a (utility) function $u : \Pi \to \R$ such that $u(\pi_0) \neq u(\pi_1)$ \emph{normalised} if and only if $\{ u(\pi_0), u(\pi_1) \} = \{0,1\}$. This is merely an (arbitrary) choice of units in which to measure utility.

Let $\Delta(S)$ denote the set of all output distributions: that is, probability distributions on $S$. For any non-empty set $A$, call a function $\phi : A \to [-\infty,\infty]$ \emph{grounded} if and only if $\inf_{a \in A} \phi(a) = 0$. In the sequel, subsets of Euclidean spaces (such as the simplex $\Delta(S)$) are always equipped with the Borel $\sigma$-algebra, and `increasing' always means `weakly increasing'.

%%%%%%%%%%%%%%%%%%%%%%%%%%%%%%%%%%%
\subsection{The standard moral-hazard model}
\label{sec:model:MH}
%%%%%%%%%%%%%%%%%%%%%%%%%%%%%%%%%%%

In the moral-hazard model \parencite{Mirrlees1999,Holmstrom1979}, the agent chooses her `effort' from a non-empty, compact and convex subset $E$ of a Euclidean space; for example, $E = [0,\bar e]$ for some $\bar e \in (0,\infty)$. Each effort level $e \in E$ incurs a cost $C(e) \geq 0$ and produces a distribution $P_e \in \Delta(S)$ of output, where the cost function $C : E \to \R_+$ is grounded and lower semi-continuous, and the map $e \mapsto P_e$ is continuous. The distribution $P_e$ need not be `objective'; all that matters is that the agent \emph{believes} that output will realise according to the probability distribution $P_e$ if she supplies effort~$e$.

The agent's payoff under a contract $w \in W$ given effort $e \in E$ is
\begin{equation*}
	-C(e) + \sum_{s \in S} u(w(s)) P_e(s) ,
\end{equation*}
where $u : \Pi \to \R$ is a strictly increasing and normalised utility function, whose curvature captures the agent's risk attitude. The agent chooses effort optimally. In general, she may find it strictly optimal to randomise effort. Her payoff under a contract $w \in W$ is therefore
\begin{equation*}
	\sup_{\mu \in \Delta(E)} \int_E \left[
	-C(e) + \sum_{s \in S} u(w(s)) P_e(s)
	\right] \mu(\dd e) ,
\end{equation*}
where $\Delta(E)$ denotes the set of all probability measures on $E$.

%%%%%%%%%%%%%%%%%%%%%%%%%%%%%%%%%%%
\subsection{Data: preferences over contracts}
\label{sec:model:prefs}
%%%%%%%%%%%%%%%%%%%%%%%%%%%%%%%%%%%

The agent's preference over contracts, $\succeq$, is a binary relation on $W$. As usual, we write $w \succ w'$ if and only if $w \succeq w' \nsucceq w$. We interpret the preference $\succeq$ as (potential) data on the agent's choice behaviour: in particular, her choices when presented with any given pair $w,w' \in W$ of contracts. In other words, $\succeq$ is an empirical object.

A binary relation $\succeq$ on $W$ is a \emph{moral-hazard preference} exactly if it may be represented by a moral-hazard model: that is, if and only if there is a non-empty, compact and convex (effort) set $E \subseteq \R^n$ (where $n \in \N$), a grounded and lower semi-continuous (cost) function $C : E \to \R_+$, a continuous (belief) map $e \mapsto P_e$ carrying $E$ into $\Delta(S)$, and a strictly increasing and normalised (utility) function $u : \Pi \to \R$, such that for any contracts $w,w' \in W$, $w \succeq w'$ holds if and only if
\begin{align*}
	&\sup_{\mu \in \Delta(E)} \int_E \left[
	-C(e) + \sum_{s \in S} u(w(s)) P_e(s)
	\right] \mu(\dd e)
	\\
	\geq {}
	&\sup_{\mu \in \Delta(E)} \int_E \left[
	-C(e) + \sum_{s \in S} u(w'(s)) P_e(s)
	\right] \mu(\dd e) .
\end{align*}

We assume that $\succeq$ is the only data that is available. In particular, following the moral-hazard literature, we assume that the agent's effort choices $e \in E$ are unobservable.

%%%%%%%%%%%%%%%%%%%%%%%%%%%%%%%%%%%
%%%%%%%%%%%%%%%%%%%%%%%%%%%%%%%%%%%
\section{A parsimonious moral-hazard model}
\label{sec:parsim}
%%%%%%%%%%%%%%%%%%%%%%%%%%%%%%%%%%%
%%%%%%%%%%%%%%%%%%%%%%%%%%%%%%%%%%%

In the moral-hazard model, effort $e \in E$ is just an index: what the agent actually cares about is the output distribution $P_e \in \Delta(S)$ and the cost $C(e)$. We may therefore recast the moral-hazard model as one in which the agent directly chooses the output distribution $p$ at some cost $c(p)$, as follows.

\begin{lemma}
	\label{lemma:parsimonious}
	A relation $\succeq$ is a moral-hazard preference if and only if there is a grounded, convex and lower semi-continuous function $c : \Delta(S) \to [0,\infty]$ and a strictly increasing and normalised function $u : \Pi \to \R$ such that for any contracts $w,w' \in W$, $w \succeq w'$ holds if and only if
	\begin{equation*}
		\max_{p \in \Delta(S)} \left[
		-c(p) + \sum_{s \in S} u(w(s)) p(s)
		\right]
		\geq \max_{p \in \Delta(S)} \left[
		-c(p) + \sum_{s \in S} u(w'(s)) p(s)
		\right] .
	\end{equation*}
\end{lemma}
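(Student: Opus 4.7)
The plan is to prove the two directions of the biconditional by explicit construction. For $(\Rightarrow)$, suppose $\succeq$ admits a moral-hazard representation with data $(E, C, (P_e), u)$. I would define $c \colon \Delta(S) \to [0,\infty]$ by the formula $(\star)$ from the introduction and then verify the claimed properties of $c$. Groundedness is immediate: lower semi-continuity of the grounded $C$ on the compact set $E$ forces $C$ to attain value $0$ at some $e^* \in E$, and the Dirac measure $\mu = \delta_{e^*}$ then witnesses $c(P_{e^*}) = 0$. Convexity follows by taking convex combinations of nearly optimal mixtures $\mu_1, \mu_2$ for $p_1, p_2$ to construct a mixture for $\alpha p_1 + (1-\alpha) p_2$. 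Lower semi-continuity is the delicate step: for a sequence $p_n \to p$ with $c(p_n) \to L$, I would select nearly optimal $\mu_n$; compactness of $E$ makes $\{\mu_n\} \subseteq \Delta(E)$ weak-$\ast$ precompact (Prokhorov), so along a subsequence $\mu_{n_k} \rightharpoonup \mu^*$; continuity of $e \mapsto P_e$ yields $\int_E P_e \, \mu^*(\dd e) = p$, while the standard weak-$\ast$ lower semi-continuity of $\mu \mapsto \int_E C \, \dd\mu$ on $\Delta(E)$ (for lsc nonnegative $C$) gives $c(p) \leq \int_E C \, \dd\mu^* \leq L$.

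The valuation identity follows from linearity of $\bar u_w(p) \coloneqq \sum_s u(w(s)) p(s)$: for any $\mu \in \Delta(E)$,
\[
\int_E \left[ -C(e) + \bar u_w(P_e) \right] \mu(\dd e) = -\int_E C \, \dd\mu + \bar u_w\left( \int_E P_e \, \mu(\dd e) \right) .
\]
Taking the supremum, I would reparameterise over $p = \int_E P_e \, \mu(\dd e) \in \Delta(S)$ with inner infimum equal to $c(p)$, obtaining $\sup_{p \in \Delta(S)} [-c(p) + \bar u_w(p)]$; upper semi-continuity of $-c + \bar u_w$ on the compact simplex $\Delta(S)$ then promotes this supremum to a maximum.

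For $(\Leftarrow)$, I would construct a moral-hazard representation from a parsimonious $(c, u)$ by the canonical choice $E = \Delta(S) \subseteq \R^{\abs{S}}$ (compact and convex), $P_e = e$ (the trivially continuous identity embedding), and $C = c$. All required properties of $(E, C, P)$ transfer directly from those of $c$, and the valuation identity from the forward direction runs in reverse to recover the MH representation. The main obstacle I anticipate is the lsc step in $(\Rightarrow)$, which is the only genuinely nontrivial piece. A secondary subtlety is that $C = c$ need not be $\R_+$-valued when $c$ takes the value $+\infty$; I would handle this either by restricting $E$ to $\dom c$ when it is closed, or else by working on the compact epigraph slice $\{(p,t) \in \Delta(S) \times [0,M] : t \geq c(p)\}$ with $C(p,t) = t$ and $P_{(p,t)} = p$, verifying that the induced preference over contracts is unchanged.
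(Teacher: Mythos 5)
Your proposal is correct and follows essentially the same route as the paper: both directions rest on the same construction, defining $c$ by the least-cost formula $(\star)$ for the `only if' part and taking $E = \Delta(S)$, $P_p = p$, $C = c$ for the `if' part (where, as you implicitly use, convexity of $c$ is exactly what makes randomisation over $\Delta(\Delta(S))$ unprofitable). The only substantive difference is the lower semi-continuity step: the paper deduces it from two applications of a Berge-type result (Lemma 17.30 in Aliprantis--Border) via the intermediate function $\widetilde{c}(q) = \inf_{e : P_e = q} C(e)$, whereas you argue directly by weak-$\ast$ compactness of $\Delta(E)$ and lower semi-continuity of $\mu \mapsto \int_E C \, \dd\mu$ --- both are standard and valid, and yours is arguably more self-contained. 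You also rightly flag that $C = c$ may take the value $+\infty$ while the model requires $C : E \to \R_+$, a point the paper's own proof passes over silently; your proposed repairs would need care (e.g.\ $\dom c$ need not be closed, and a fixed truncation level $M$ can alter the preference when $u$ is unbounded), but identifying the issue is to your credit.
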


We call any pair $(c,u)$ that satisfies the properties in \Cref{lemma:parsimonious} a \emph{parsimonious representation} of the moral-hazard preference $\succeq$. (`Parsimonious' since it has just two parameters, $c$ and $u$.) In the remainder of the paper, we work with parsimonious moral-hazard representations.

The main claim of \Cref{lemma:parsimonious} is that the cost function $c$ may be chosen to be convex. The fact that we may write `$\max$' instead of `$\sup$' follows from the lower semi-continuity of $c$. Note that although the representation `allows' the agent to choose \emph{any} output distribution $p \in \Delta(S)$, constraints can still be captured by setting $c(p)=\infty$ for some distributions $p \in \Delta(S)$.

\begin{proof}
	For the `only if' part, let $\succeq$ be a moral-hazard preference, with effort set $E$, cost function $C$, belief map $e \mapsto P_e$ and utility function $u$. For each output distribution $p \in \Delta(S)$, let $c(p)$ be the least cost at which $p$ may be produced:
	\begin{equation*}
		c(p) \coloneqq
		\inf_{\substack{\mu \in \Delta(E) :\\ \int_E P_e \mu(\dd e) = p}} \int_E C(e) \mu(\dd e) ,
	\end{equation*}
	with the convention that $\inf \varnothing = \infty$. The function $c$ maps $\Delta(S)$ into $[0,\infty]$, is convex and lower semi-continuous by construction,%
		\footnote{Convexity is easily verified. For lower semi-continuity (`lsc'), let $\Delta(\Delta(S))$ be the set of all probabilities on $\Delta(S)$, equipped with the topology of weak convergence. For each $p \in \Delta(S)$, $c(p) = \inf_{\nu \in B(p)} \int_{\Delta(S)} \widetilde{c}(q) \nu(\dd q)$, where $B(p) \coloneqq \{ \nu \in \Delta(\Delta(S)) : \int_{\Delta(S)} q \nu(\dd q) = p \}$ and $\widetilde c(q) \coloneqq \inf_{e \in E : P_e = q} C(e)$ for each $q \in \Delta(S)$, and $\inf \varnothing = \infty$ by convention. The map $\widetilde c : \Delta(S) \to [0,\infty]$ is lsc by Lemma 17.30 in \textcite{AliprantisBorder2006}, since $C$ is lsc and (by the continuity of $e \mapsto P_e$) the correspondence $q \mapsto \{ e \in E : P_e = q \}$ is upper hemi-continuous (`uhc') and compact-valued. Since the map $\nu \mapsto \int_{\Delta(S)} q \nu(\dd q)$ is continuous \parencite[][Proposition 1.1]{Phelps2001}, the correspondence $B : \Delta(S) \rightrightarrows \Delta(\Delta(S))$ is uhc and compact-valued. Thus $c$ is lsc by Lemma 17.30 in \textcite{AliprantisBorder2006}. This argument is adapted from \textcite[][appendix A.4.1]{LipnowskiRavid2020}.}
	and is grounded since $C$ is. And clearly
	\begin{multline*}
		\sup_{\mu \in \Delta(E)} \int_E \left[
		-C(e) + \sum_{s \in S} u(w(s)) P_e(s)
		\right] \mu(\dd e)
		\\
		\begin{aligned}
			&= \sup_{p \in \Delta(S)}
			\sup_{\substack{\mu \in \Delta(E) :\\ \int_E P_e \mu(\dd e) = p}} \left[
			- \int_E C(e) \mu(\dd e)
			+ \sum_{s \in S} u(w(s)) p(s)
			\right]
			\\
			&= \sup_{p \in \Delta(S)}
			\left[
			- c(p) + \sum_{s \in S} u(w(s)) p(s)
			\right] .
		\end{aligned}
	\end{multline*}
	Finally, the supremum is attained since $c$ is lower semi-continuous.

	For the `if' part, let $\succeq$ admit parsimonious representation $(c,u)$. We have
	\begin{multline*}
		\max_{p \in \Delta(S)}
		\left[
		- c(p) + \sum_{s \in S} u(w(s)) p(s)
		\right]
		\\
		= \sup_{\mu \in \Delta(\Delta(S))}
		\int_{\Delta(S)} 
		\left[
		- c(p) 
		+ \sum_{s \in S} u(w(s)) p(s)
		\right]
		\mu( \dd p ) 
	\end{multline*}
	since $c$ is convex and $p \mapsto \sum_{s \in S} u(w(s)) p(s)$ is linear, so that randomisation is not strictly optimal. This is a moral-hazard model, with `effort' $p \in \Delta(S) = E$, cost function $C = c$, and belief map $P_p = p$ for every $p \in \Delta(S)$.
\end{proof}

\begin{remark}
	\label{remark:balazs}
	This paper is concerned with the moral-hazard model's behavioural foundations, rather than with its implications for optimal contracting. Still, it is worth pointing out that the parsimonious formulation of the moral-hazard model is in some ways more tractable than the standard formulation described in \cref{sec:model:MH}, since the cost function $c$ is so well-behaved. \textcite{MirrleesZhou2006,GeorgiadisRavidSzentes2023} use this insight to obtain new results on implementability and optimal contracts.
\end{remark}

%%%%%%%%%%%%%%%%%%%%%%%%%%%%%%%%%%%
%%%%%%%%%%%%%%%%%%%%%%%%%%%%%%%%%%%
\section{Empirical content of the moral-hazard model}
\label{sec:axioms}
%%%%%%%%%%%%%%%%%%%%%%%%%%%%%%%%%%%
%%%%%%%%%%%%%%%%%%%%%%%%%%%%%%%%%%%

In this section, I give a characterisation of the behavioural implications of the moral-hazard model. I first introduce a number of \emph{axioms:} testable properties of choice behaviour which a given binary relation $\succeq$ on $W$ may or may not satisfy. I then show that a binary relation $\succeq$ is a moral-hazard preference if and only if it satisfies these axioms. In other words, these axioms exhaust the empirical implications of the moral-hazard model.

We shall work with `mixtures', defined standardly as follows. For any random remunerations $x,x' \in \Delta(\Pi)$ and any $\alpha \in (0,1)$, we write $\alpha x + (1-\alpha) x'$ for the random remuneration under which the agent's pay is drawn from $x$ with probability $\alpha$, and drawn from $x'$ otherwise.%
	\footnote{Explicitly: $[ \alpha x + (1-\alpha) x' ](A) \coloneqq \alpha x(A) + (1-\alpha) x'(A)$ for each finite $A \subseteq \Pi$.}
For any contracts $w,w' \in W$ and any $\alpha \in (0,1)$, the contract $\alpha w + (1-\alpha) w'$ is given by $[ \alpha w + (1-\alpha) w' ](s) \coloneqq \alpha w(s) + (1-\alpha) w'(s)$ for each output level $s \in S$.

%%%%%%%%%%%%%%%%%%%%%%%%%%%%%%%%%%%
\subsection{Basic axioms}
\label{sec:axioms:tech}
%%%%%%%%%%%%%%%%%%%%%%%%%%%%%%%%%%%

Our first four axioms are standard. The first three express basic `economic' properties, and the fourth is a technical regularity condition.

\begin{axiom}[weak order]
	\label{axiom:weakorder}
	$\succeq$ is complete and transitive.
\end{axiom}

\begin{axiom}[monotonicity]
	\label{axiom:m-mon}
	For any prizes $\pi,\pi' \in \Pi$, $\pi > \pi'$ implies $\pi \succ \pi'$.
\end{axiom}

\begin{axiom}[dominance]
	\label{axiom:s-mon}
	If two contracts $w,w' \in W$ satisfy $w(s) \succeq w'(s)$ for every output level $s \in S$, then $w \succeq w'$.
\end{axiom}

\begin{axiom}[continuity]
	\label{axiom:continuity}
	For any contracts $w,w',w'' \in W$,
	the sets $\{ \alpha \in [0,1] : \alpha w + (1-\alpha) w' \succeq w'' \}$ and $\{ \alpha \in [0,1] : w'' \succeq \alpha w + (1-\alpha) w' \}$ are closed.
\end{axiom}

It is easily verified (using \Cref{lemma:parsimonious}) that any moral-hazard preference satisfies these four axioms.

The first of our two substantial axioms is as follows.

\begin{namedthm}[Quasiconvexity axiom.]
	\label{axiom:quasiconvexity}
	For any contracts $w,w' \in W$ such that $w \succeq w' \succeq w$, it holds that $w \succeq \alpha w + (1-\alpha) w'$ for all $\alpha \in (0,1)$.
\end{namedthm}

\hyperref[axiom:quasiconvexity]{Quasiconvexity} says precisely that the agent is averse to `mixing' contracts: if two contracts are equally good, then the contract obtained by randomly selecting one of them (using an $\alpha$-biased coin toss) is weakly worse. In a parsimonious moral-hazard representation, this axiom is satisfied because tailoring a single `effort level' $p'' \in \Delta(\Pi)$ to the mixture of two contracts $w$ and $w'$ is more difficult than tailoring `effort' separately to each of the two contracts.%
	\footnote{\label{footnote:quasiconvexity_mh}Explicitly, writing $\phi_w(p) \coloneqq -c(p) + \sum_{s \in S} u(w(s)) p(s)$ for any $w \in W$ and $p \in \Delta(S)$, we have $\phi_{\alpha w + (1-\alpha) w'}(p) = \alpha \phi_w(p) + (1-\alpha) \phi_{w'}(p) \leq \alpha \max_{\Delta(S)} \phi_w + (1-\alpha) \max_{\Delta(S)} \phi_{w'}$ for every $p \in \Delta(S)$, so $\max_{\Delta(S)} \phi_{\alpha w + (1-\alpha) w'} \leq \alpha \max_{\Delta(S)} \phi_w + (1-\alpha) \max_{\Delta(S)} \phi_{w'}$.}
(A version of) this observation was first made by \textcite{Dreze1961}.%
	\footnote{His `postulate~P2.1' is \hyperref[axiom:quasiconvexity]{Quasiconvexity}. See also \textcite[ch.~2]{Dreze1987} and \textcite{Machina1984}.}

Mathematically, \hyperref[axiom:quasiconvexity]{Quasiconvexity} is identical to Schmeidler's (\citeyear{Schmeidler1989}) definition of `uncertainty-seeking' (the opposite of `uncertainty-aversion').

\begin{example}
	\label{example:malevolent}
	To see what \hyperref[axiom:quasiconvexity]{Quasiconvexity} rules out, consider a moral-hazard model in which `effort' is chosen (and costs are borne) by malevolent Nature: let $c : \Delta(S) \to [0,\infty]$ be grounded, convex and lower semi-continuous, let $u : \Pi \to \R$ be strictly increasing and normalised, and let $\succeq$ be the preference such that for any contracts $w,w' \in W$, $w \succeq w'$ holds if and only if
	\begin{equation*}
		\min_{p \in \Delta(S)} \left[
		c(p) + \sum_{s \in S} u(w(s)) p(s)
		\right]
		\geq \min_{p \in \Delta(S)} \left[
		c(p) + \sum_{s \in S} u(w'(s)) p(s)
		\right] .
	\end{equation*}
	Then $\succeq$ violates \hyperref[axiom:quasiconvexity]{Quasiconvexity}, except if $c$ is trivial.%
		\footnote{Reasoning similar to that in \cref{footnote:quasiconvexity_mh} shows that $\succeq$ satisfies the \emph{reverse} inequality (`Quasiconcavity'), and that the inequality is strict for some $w,w' \in W$ except if $c$ is trivial.}
	Preferences like $\succeq$, called \emph{variational,} will appear again in \cref{sec:axioms:theorem} below.
\end{example}

%%%%%%%%%%%%%%%%%%%%%%%%%%%%%%%%%%%
\subsection{Independence}
\label{sec:axioms:indep}
%%%%%%%%%%%%%%%%%%%%%%%%%%%%%%%%%%%

Every moral-hazard preference has an expected-utility attitude to \emph{risk:} random remunerations $x \in \Delta(\Pi)$ (that is, \emph{constant} contracts) are evaluated via the expectation $u(x) = \int_\Pi u(\pi) x(\dd \pi)$. By the expected-utility theorem \parencite{VonneumannMorgenstern1947}, a relation $\succeq$ which satisfies Axioms~\ref{axiom:weakorder}--\ref{axiom:continuity} has this `objective-expected-utility' property if it satisfies the following axiom.

\begin{namedthm}[vNM Independence axiom.]
	\label{axiom:vnm-indep}
	For any random remunerations $x,x' \in \Delta(\Pi)$ and any $\alpha \in (0,1)$,
	\begin{equation*}
		x \succeq x'
		\quad \text{implies} \quad
		\alpha x + (1-\alpha) y
		\succeq
		\alpha x' + (1-\alpha) y 
		\quad \text{for any $y \in \Delta(\Pi)$.}		
	\end{equation*}
\end{namedthm}

Moral-hazard preferences do \emph{not} generally have an expected-utility attitude to \emph{(Knightian) uncertainty.} Equivalently (given Axioms~\ref{axiom:weakorder}--\ref{axiom:continuity}), moral-hazard preferences do not generally satisfy the stronger Anscombe--Aumann Independence axiom.%
	\footnote{\label{footnote:aa_indep}The axiom: for any contracts $w,w' \in W$ and any $\alpha \in (0,1)$,
	$w \succeq w'$
	implies
	$\alpha w + (1-\alpha) w''
	\succeq
	\alpha w' + (1-\alpha) w''$ for any contract $w'' \in W$. Equivalence holds by the Anscombe--Aumann (\citeyear{AnscombeAumann1963}) expected-utility theorem.}
They do, however, satisfy the following independence axiom from \textcite{MaccheroniMarinacciRustichini2006}, which is stronger than vNM Independence but weaker than Anscombe--Aumann Independence.

\begin{namedthm}[MMR Independence axiom.]
	\label{axiom:mmr-indep}
	For any contracts $w,w' \in W$ and any $\alpha \in (0,1)$,
	\begin{align*}
		\alpha w + (1-\alpha) y
		&\succeq
		\alpha w' + (1-\alpha) y
		&&\text{for some $y \in \Delta(\Pi)$}
		\\
		\text{implies} \qquad
		\alpha w + (1-\alpha) y'
		&\succeq
		\alpha w' + (1-\alpha) y'
		&&\text{for any $y' \in \Delta(\Pi)$.}		
	\end{align*}
\end{namedthm}

\hyperref[axiom:mmr-indep]{MMR Independence} requires that mixing one random remuneration ($y$) into two contracts ($w$ and $w'$) is much the same as mixing in another ($y'$)---what matters is the probability $(1-\alpha)$ with which the output-independent remuneration is awarded instead of whatever output-dependent remuneration the contracts ($w$ and $w'$) specify. The reason why the probability $\alpha$ matters is that it affects the \emph{slope} of contracts, i.e. how they vary with output $s \in S$; by contrast, whether remuneration is (constant and equal to) $y$ or $y'$ with probability $(1-\alpha)$ does not affect slope. In a parsimonious moral-hazard representation, \hyperref[axiom:mmr-indep]{MMR Independence} is satisfied because replacing $y$ with $y'$ does not affect the optimal choice of `effort' $p \in \Delta(S)$.

\begin{example}
	\label{example:incomefx}
	To understand what \hyperref[axiom:mmr-indep]{MMR Independence} rules out, consider a moral-hazard model with income effects: let $u : \Pi \to \R$ be strictly increasing and normalised, let $V : \Delta(S) \times u(\Delta(\Pi)) \to [-\infty,\infty)$ be quasiconcave with $V(p,\cdot)$ strictly increasing for each $p \in \Delta(S)$ and $\sup_{p \in \Delta(S)} V(p,t) = t$ for each $t \in u(\Delta(\Pi))$, and let $\succeq$ be the preference such that for any contracts $w,w' \in W$, $w \succeq w'$ holds if and only if
	\begin{equation*}
		\max_{p \in \Delta(S)} V
		\left(
		p, \sum_{s \in S} u(w(s)) p(s)
		\right)
		\geq \max_{p \in \Delta(S)} V
		\left(
		p, \sum_{s \in S} u(w'(s)) p(s)
		\right) .
	\end{equation*}
	Then $\succeq$ violates \hyperref[axiom:mmr-indep]{MMR Independence}, except if $V$ has the quasilinear form $V(p,t) = -c(p) + t$ for some (necessarily convex) $c : \Delta(S) \to [0,\infty]$.%
		\footnote{Preferences like $\succeq$, except with $V$ replaced by $(p,t) \mapsto -V(p,-t)$ and `$\max$' replaced by `$\min$', were studied by \textcite{CerreiavioglioEtal2011}.}
\end{example}

%%%%%%%%%%%%%%%%%%%%%%%%%%%%%%%%%%%
\subsection{Axiomatic behavioural characterisation}
\label{sec:axioms:theorem}
%%%%%%%%%%%%%%%%%%%%%%%%%%%%%%%%%%%

The following proposition characterises the behavioural content of the moral-hazard model.

\begin{proposition}
	\label{proposition:axiomatisation}
	For a binary relation $\succeq$ on $W$, the following are equivalent:
	
	\begin{enumerate}
	
		\item \label{item:thm_axioms} $\succeq$ satisfies Axioms~\ref{axiom:weakorder}--\ref{axiom:continuity}, \hyperref[axiom:mmr-indep]{MMR Independence}, and \hyperref[axiom:quasiconvexity]{Quasiconvexity}.

		\item \label{item:thm_repres} $\succeq$ is a moral-hazard preference.

	\end{enumerate}
\end{proposition}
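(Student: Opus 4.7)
For the direction (2) $\Rightarrow$ (1), my plan is to verify each axiom directly using the parsimonious representation of \Cref{lemma:parsimonious}. Writing $V(w) \coloneqq \max_{p \in \Delta(S)} \bigl[ \sum_{s \in S} u(w(s)) p(s) - c(p) \bigr]$, the \hyperref[axiom:weakorder]{weak order} axiom is immediate; Axioms~\ref{axiom:m-mon} and~\ref{axiom:s-mon} follow from $u$ being strictly increasing; \hyperref[axiom:continuity]{Continuity} follows from joint continuity of the objective in $(w,p)$ together with the maximum theorem; \hyperref[axiom:mmr-indep]{MMR Independence} holds because the $y$-dependent term in $V(\alpha w + (1-\alpha) y)$ is a $p$-independent additive constant $(1-\alpha) u(y)$, so it cancels in the comparison $V(\alpha w + (1-\alpha) y) \gtreqless V(\alpha w' + (1-\alpha) y)$; and \hyperref[axiom:quasiconvexity]{Quasiconvexity} follows from the fact that $V$ is convex, as a pointwise supremum of functions affine in $w$.

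For the main direction (1) $\Rightarrow$ (2), the plan is to reduce the problem to a variational-style representation theorem. In a first step, I would obtain a vNM expected utility $u : \Pi \to \R$ representing $\succeq$ on $\Delta(\Pi)$: restricted to random remunerations $x, x' \in \Delta(\Pi)$, \hyperref[axiom:mmr-indep]{MMR Independence} implies that the relation $\alpha x + (1-\alpha) y \succeq \alpha x' + (1-\alpha) y$ does not depend on the auxiliary mixture $y$, which together with Axioms~\ref{axiom:weakorder}--\ref{axiom:continuity} yields the standard vNM Independence axiom on $\Delta(\Pi)$; the von Neumann--Morgenstern theorem then provides the desired continuous, strictly increasing, normalised utility $u$. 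In a second step, I would encode each contract $w \in W$ by its `utility act' $f_w : S \to u(\Delta(\Pi))$ given by $f_w(s) \coloneqq u(w(s))$ and note that the preference on contracts translates into a preference on such acts satisfying the hypotheses of an `optimistic' variational representation theorem --- with \hyperref[axiom:quasiconvexity]{Quasiconvexity} playing the role of uncertainty-affinity in place of the usual uncertainty-aversion, and with \hyperref[axiom:mmr-indep]{MMR Independence} serving as the certainty-independence condition. I would then invoke the representation result of \textcite{DziewulskiQuah2024}, or alternatively reduce to the variational theorem of \textcite{MaccheroniMarinacciRustichini2006} applied to the reversed preference $\succeq^{\mathrm{op}}$ together with the negated utility $-u$, to obtain a grounded, convex, lower semi-continuous $c : \Delta(S) \to [0,\infty]$ such that $V(w) = \max_p \bigl[ \sum_{s \in S} u(w(s)) p(s) - c(p) \bigr]$ represents $\succeq$. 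An appeal to \Cref{lemma:parsimonious} then converts this parsimonious representation into the desired moral-hazard representation.

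The main obstacle will be the second step --- aligning the present axioms with the hypotheses of the variational theorem in its `optimistic' incarnation. In particular, the utility-act domain $u(\Delta(\Pi))^S$ may be a strict subset of $\R^S$, so care is needed to ensure that the cited representation theorem applies (or that the representation extends suitably to all contracts, using \hyperref[axiom:continuity]{Continuity} and \hyperref[axiom:s-mon]{Dominance}). If the direct appeal to \textcite{DziewulskiQuah2024} does not align exactly with the domain at hand, the fallback reduction to MMR's theorem via the reversed preference $\succeq^{\mathrm{op}}$ requires careful tracking of several sign changes: monotonicity of $\succeq$ must translate into anti-monotonicity of $\succeq^{\mathrm{op}}$ in the utility $u$, hence into monotonicity in $-u$; and groundedness together with lower semi-continuity of the resulting $c$ must be verified after unwinding the reflection.
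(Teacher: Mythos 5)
Your fallback route---applying Theorem~3 of \textcite{MaccheroniMarinacciRustichini2006} to the reversed preference with the negated utility $-u$---is precisely the paper's proof, and it works: the inverse relation satisfies weak order, dominance, continuity, MMR (weak certainty) independence, non-degeneracy (via \Cref{axiom:m-mon}) and uncertainty-aversion (your \hyperref[axiom:quasiconvexity]{Quasiconvexity} reversed), so the variational theorem delivers $(c,v)$ with $-v$ strictly increasing, which is then normalised and converted via \Cref{lemma:parsimonious}. The domain worry you raise is moot, since the variational theorem applies directly to preferences over Anscombe--Aumann acts valued in $\Delta(\Pi)$ (i.e.\ contracts), so no passage to utility acts in $u(\Delta(\Pi))^S$, and no separate prior extraction of $u$ via vNM, is needed.
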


To prove \Cref{proposition:axiomatisation}, we begin with an observation. The \emph{inverse} of a binary relation $\succeq$ on $W$ is the binary relation $\sqsupseteq$ on $W$ such that $w' \sqsupseteq w$ if and only if $w \succeq w'$, for all contracts $w,w' \in W$. Given a binary relation $\sqsupseteq$ on $W$, a grounded, convex and lower semi-continuous function $c : \Delta(S) \to [0,\infty]$, and a non-constant function $v : \Pi \to \R$, we say that $(c,v)$ is a \emph{variational representation} of $\sqsupseteq$ exactly if for any contracts $w,w' \in W$, $w' \sqsupseteq w$ holds if and only if
\begin{equation*}
	\min_{p \in \Delta(S)} \left[
	c(p) + \sum_{s \in S} v(w'(s)) p(s)
	\right]
	\geq \min_{p \in \Delta(S)} \left[
	c(p) + \sum_{s \in S} v(w(s)) p(s)
	\right] .
\end{equation*}

\begin{observation}
	\label{observation:variational}
	For a binary relation $\succeq$ on $W$, a grounded, convex and lower semi-continuous function $c : \Delta(S) \to [0,\infty]$, and a non-constant function $u : \Pi \to \R$, the following are equivalent:
	
	\begin{itemize}
	
		\item $(c,u)$ is a parsimonious moral-hazard representation of $\succeq$.

		\item $(c,-u)$ is a variational representation of the inverse of $\succeq$, and $u$ is strictly increasing and normalised.

	\end{itemize}
\end{observation}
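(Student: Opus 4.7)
The plan is to reduce the equivalence to the standard sign-flip identity between sup- and inf-problems. Specifically, for any function $f : \Delta(S) \to \R$,
\[
\min_{p \in \Delta(S)} \bigl[c(p) - f(p)\bigr]
= -\max_{p \in \Delta(S)} \bigl[-c(p) + f(p)\bigr] ,
\]
and both extrema are attained on the compact set $\Delta(S)$ because $c$ is lower semi-continuous (so $c - f$ is lsc and $-c + f$ is usc whenever $f$ is linear, hence continuous).

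Apply the identity with $f(p) = \sum_{s \in S} u(w(s)) p(s)$ for a generic contract $w \in W$. The parsimonious moral-hazard comparison
\[
\max_{p \in \Delta(S)} \Bigl[-c(p) + \sum_{s \in S} u(w(s)) p(s)\Bigr]
\geq \max_{p \in \Delta(S)} \Bigl[-c(p) + \sum_{s \in S} u(w'(s)) p(s)\Bigr]
\]
then becomes, after negation and substitution,
\[
\min_{p \in \Delta(S)} \Bigl[c(p) + \sum_{s \in S} (-u)(w'(s)) p(s)\Bigr]
\geq \min_{p \in \Delta(S)} \Bigl[c(p) + \sum_{s \in S} (-u)(w(s)) p(s)\Bigr] ,
\]
which is precisely the variational comparison that would certify $w' \sqsupseteq w$ under representation $(c, -u)$. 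Since $w' \sqsupseteq w$ is by definition the same as $w \succeq w'$, the two representations describe the same data once the ancillary hypotheses are matched.

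Matching those hypotheses is the only other thing to check, and presents no real obstacle. The parsimonious MH definition bakes in that $u$ is strictly increasing and normalised; this makes $-u$ non-constant, which is all the variational representation demands of its index function. Conversely, the second bullet explicitly tacks the strictly-increasing-and-normalised condition on $u$ back onto the variational representation, so the two sets of ancillary requirements coincide exactly, and the common $c$ carries the same three properties (grounded, convex, lsc) in both. The observation is thus essentially an algebraic restatement via the duality $\max(\cdot) = -\min(-\cdot)$; the only point worth flagging is that lower semi-continuity of $c$ is used precisely to ensure that the `$\max$' and `$\min$' in both representations are attained rather than merely approached.
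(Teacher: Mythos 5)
Your proposal is correct and matches the intended argument: the paper states this as an unproved observation precisely because it reduces to the sign-flip identity $\min_p[c(p)-f(p)] = -\max_p[-c(p)+f(p)]$ together with the definition of the inverse relation, which is exactly what you spell out. Your added remarks on attainment of the extrema and on matching the ancillary hypotheses on $u$ versus $-u$ are accurate and complete the verification.
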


\begin{proof}[Proof of \Cref{proposition:axiomatisation}]
	It is easily verified (using \Cref{lemma:parsimonious}) that \ref{item:thm_repres} implies \ref{item:thm_axioms}. For the converse, suppose that $\succeq$ satisfies \ref{item:thm_axioms}, and let $\sqsupseteq$ be its inverse. By inspection, $\sqsupseteq$ satisfies Axioms~\ref{axiom:weakorder}, \ref{axiom:s-mon} and \ref{axiom:continuity}, \hyperref[axiom:mmr-indep]{MMR Independence}, \emph{Non-degeneracy} (there exist contracts $w,w' \in W$ such that $w \succ w'$) and \emph{Quasiconcavity} (for any contracts $w,w' \in W$ such that $w \sqsupseteq w' \sqsupseteq w$, it holds that $\alpha w + (1-\alpha) w' \sqsupseteq w$ for all $\alpha \in (0,1)$). Thus by Theorem 3 in \textcite{MaccheroniMarinacciRustichini2006}, $\sqsupseteq$ admits a variational representation $(c,v)$. By \Cref{axiom:m-mon}, $-v$ is strictly increasing, so $u(\cdot) \coloneqq [ v(\pi_0) - v(\cdot) ] / [ v(\pi_0) - v(\pi_1) ]$ is strictly increasing and normalised. Hence by \Cref{observation:variational}, $(c,u)$ is a parsimonious moral-hazard representation of $\succeq$, and thus $\succeq$ is a moral-hazard preference by \Cref{lemma:parsimonious}.
\end{proof}

%%%%%%%%%%%%%%%%%%%%%%%%%%%%%%%%%%%
%%%%%%%%%%%%%%%%%%%%%%%%%%%%%%%%%%%
\section{Identification of the moral-hazard model}
\label{sec:ident}
%%%%%%%%%%%%%%%%%%%%%%%%%%%%%%%%%%%
%%%%%%%%%%%%%%%%%%%%%%%%%%%%%%%%%%%

In this section, I show that the parsimonious moral-hazard model is identified: both of its parameters, $c$ and $u$, are recoverable from contract choice data.

\begin{definition}
	\label{definition:unbound}
	A binary relation $\succeq$ on $W$ is \emph{unbounded} if and only if there are random remunerations $x \succ y$ in $\Delta(\Pi)$ such that for any $\alpha \in (0,1)$, we may find random remunerations $z,z' \in \Delta(\Pi)$ that satisfy $y \succ \alpha z + (1-\alpha) x$ and $\alpha z' + (1-\alpha) y \succ x$.
\end{definition}

It is easy to see that a moral-hazard preference $\succeq$ is unbounded if and only if for any parsimonious representation $(c,u)$ of $\succeq$, the function $u$ is unbounded both above and below (equivalently, $u(\Delta(\Pi)) = \R$).

\begin{proposition}[identification]
	\label{proposition:identification}
	Each unbounded moral-hazard preference admits \emph{exactly} one parsimonious representation.
\end{proposition}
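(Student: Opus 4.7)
The plan is to reduce uniqueness for the parsimonious moral-hazard model to the known uniqueness theorem for variational preferences. Suppose $(c_1, u_1)$ and $(c_2, u_2)$ are both parsimonious representations of an unbounded moral-hazard preference $\succeq$. By \Cref{observation:variational}, $(c_i, -u_i)$ is a variational representation of the inverse relation $\sqsupseteq$ for $i \in \{1,2\}$.

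I would first verify that $\sqsupseteq$ is unbounded in the sense of \textcite{MaccheroniMarinacciRustichini2006}. Translating their definition via $a \sqsupset b \iff b \succ a$ yields exactly the condition in \Cref{definition:unbound} for $\succeq$ after relabelling $x \leftrightarrow y$ and $z \leftrightarrow z'$ (intuitively, each definition captures unboundedness of the underlying utility both above and below, and so is insensitive to reversing the direction of preference). So the hypothesised unboundedness of $\succeq$ delivers MMR-unboundedness of $\sqsupseteq$. The uniqueness part of the MMR characterisation theorem for unbounded variational preferences then yields constants $\alpha > 0$ and $\beta \in \R$ such that $-u_2 = \alpha(-u_1) + \beta$ and $c_2 = \alpha c_1$, i.e., $u_2 = \alpha u_1 - \beta$.

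It remains to pin down $\alpha$ and $\beta$ using the normalisation: both $u_i$ are strictly increasing with $\{u_i(\pi_0), u_i(\pi_1)\} = \{0,1\}$, so (using $\pi_0 < \pi_1$) $u_i(\pi_0) = 0$ and $u_i(\pi_1) = 1$ for $i \in \{1,2\}$. Evaluating $u_2 = \alpha u_1 - \beta$ at $\pi_0$ gives $\beta = 0$, and at $\pi_1$ then gives $\alpha = 1$. Hence $u_1 \equiv u_2$ and $c_1 \equiv c_2$. Since the heavy lifting is done by the variational uniqueness theorem together with \Cref{observation:variational}, there is no single difficult step; the thing to watch is the unboundedness translation and the sign flip on utility (we need $-u$, not $u$, to serve as the `utility' index in the variational representation of the inverse relation).
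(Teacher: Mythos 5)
Your proposal is correct and follows essentially the same route as the paper: reduce to the uniqueness theorem for unbounded variational preferences (Proposition~6 of Maccheroni, Marinacci and Rustichini) via \Cref{observation:variational}, with the normalisation $\{u(\pi_0),u(\pi_1)\}=\{0,1\}$ and monotonicity killing the residual affine indeterminacy --- your treatment of the unboundedness translation and the sign flip is in fact more explicit than the paper's. The only omission is the existence half of \emph{exactly one}, which the paper dispatches by citing \Cref{lemma:parsimonious}; you should add that one line.
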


\begin{proof}
	By \Cref{lemma:parsimonious}, each moral-hazard preference admits \emph{at least} one parsimonious representation. By Proposition 6 in \textcite{MaccheroniMarinacciRustichini2006} and \Cref{observation:variational}, each unbounded moral-hazard preference admits \emph{at most} one parsimonious representation.%
		\footnote{This argument, and thus \Cref{proposition:identification}, remains valid if unboundedness is weakened to require only that $u(\Delta(\Pi))$ be unbounded \emph{either} above \emph{or} below. The full force of unboundedness will be needed in \cref{sec:conf_optim:optim} below, however.}
\end{proof}

\Cref{proposition:identification} asserts that observing the agent's choices between pairs of contracts suffices to recover the parameters $c$ and $u$. Concretely, $u$ may be recovered in standard fashion from choice between random remunerations (that is, \emph{constant} contracts), whereupon $c$ may be recovered as
\begin{equation*}
	c(p)
	= \sup_{w \in W} \left( - u(x_w) + \sum_{s \in S} u(w(s)) p(s) \right)
	\quad \text{for each $p \in \Delta(S)$,}
\end{equation*}
where $x_w$ denotes the unique random remuneration $x \in \Delta(\Pi)$ that satisfies $x \succeq w \succeq x$.

The standard four-parameter moral-hazard model $(E,C,e \mapsto P_e,u)$ described in \cref{sec:model:MH} is only partially identified off contract choice data $\succeq$. In particular, it is not possible to disentangle the effort cost $C$ from the effort-to-output map $e \mapsto P_e$: the data $\succeq$ can only reveal the minimum cost $c(p)$ of inducing any given output distribution $p \in \Delta(S)$. Richer data could help: for example, data on chosen effort $e \in E$ \parencite[see][]{Karni2006}. Such data is rarely available in the field,%
	\footnote{See \textcite[section~3]{Georgiadis2022} for some notable exceptions.}
but can potentially be obtained in the lab.%
	\footnote{Two common methods are incentivised elicitation and measuring performance on a cognitive task \parencite[see e.g.][]{CharnessGneezyHenderson2018}. A third method is measuring the total time spent on a task \parencite[e.g.][]{AvoyanRibeiroSchotter2024}.}

%%%%%%%%%%%%%%%%%%%%%%%%%%%%%%%%%%%
%%%%%%%%%%%%%%%%%%%%%%%%%%%%%%%%%%%
\section{Confidence and optimism}
\label{sec:conf_optim}
%%%%%%%%%%%%%%%%%%%%%%%%%%%%%%%%%%%
%%%%%%%%%%%%%%%%%%%%%%%%%%%%%%%%%%%

In this section, I propose behavioural definitions `more confident than' and `more optimistic than'. These definitions are couched entirely in terms of observed choice between contracts, making no reference to any objective facts about how the agent's effort \emph{actually} influences output. I show that in the parsimonious moral-hazard model, increased confidence corresponds to `vertical' shifts (precisely: pointwise decreases) of the output-distribution cost function $c$, while increased optimism corresponds to `horizontal' shifts of $c$ in the direction of the first-order stochastically higher output distributions. I also discuss what data is required to distinguish empirically between `more confident than' and `more optimistic than'. I conclude by noting that my notions of \emph{relative} confidence and optimism lead naturally to definitions of what it means for an agent to be (absolutely) \emph{over}confident or optimistic.

%%%%%%%%%%%%%%%%%%%%%%%%%%%%%%%%%%%
\subsection{Confidence}
\label{sec:conf_optim:overconf}
%%%%%%%%%%%%%%%%%%%%%%%%%%%%%%%%%%%

A confident agent is one who believes she can significantly influence the distribution of output. In terms of contract choice behaviour, greater confidence is thus expressed by a greater appetite for non-constant contracts, under which pay depends on realised output. This motivates the following purely behavioural definition of relative confidence.

\begin{definition}
	\label{definition:overconf}
	Let $\succeq$ and $\succeq'$ be binary relations on $W$. $\succeq$ is \emph{more confident than} $\succeq'$ if and only if whenever $w \succeq' \mathrel{(\succ')} x$ for a contract $w \in W$ and a constant contract $x \in \Delta(\Pi)$, we also have $w \succeq \mathrel{(\succ)} x$.
\end{definition}

Note that the definition of relative confidence makes no reference to any objective facts about the agent's \emph{actual} ability to influence the distribution of output. Greater confidence may therefore reflect either an increased subjective confidence on the agent's part about her ability to influence output, or an actual improvement in her ability, or a combination of the two.

Mathematically, \Cref{definition:overconf} is identical to the standard definition of `less uncertainty-averse than' \parencite{Epstein1999,GhirardatoMarinacci2002}.

In the moral-hazard model, greater confidence is equivalent to uniformly lower costs:

\begin{proposition}
	\label{proposition:conf}
	Let $\succeq$ and $\succeq'$ be moral-hazard preferences, with parsimonious representations $(c,u)$ and $(c',u')$. Then $\succeq$ is more confident than $\succeq'$ if and only if $u=u'$ and $c \leq c'$.
\end{proposition}

Equivalently, $(c,u)$ is more confident than $(c',u')$ exactly if $u = u'$ and the cost level sets are nested:
\begin{equation*}
	\left\{ p \in \Delta(S) : c(p) \leq k \right\}
	\supseteq 
	\left\{ p \in \Delta(S) : c'(p) \leq k \right\}
	\quad \text{for every $k \geq 0$.}
\end{equation*}
In other words, any output distribution that is (believed to be) attainable at cost $\leq k$ under $(c',u')$ is also (believed to be) attainable at cost $\leq k$ under $(c,u)$. The fact that $u=u'$ is implied reflects a (conceptually desirable) separation of confidence from risk attitude.

\begin{example}
	\label{example:1d_vertical}
	Let output be binary, $S = \{0,1\}$, so that each distribution $p \in \Delta(S)$ may be identified with a single number, $p \equiv \text{Pr}(\text{output}=1)$. Let $c(p) \coloneqq \alpha (p-\beta)^2$ for each $p \in [0,1]$, where $\alpha \in \R_+$ and $\beta \in [0,1]$ are parameters. As $\alpha$ falls (holding $\beta$ constant), $c$ shifts vertically downwards (see \Cref{fig:confidence}). By \Cref{proposition:conf}, the agent becomes more confident.

	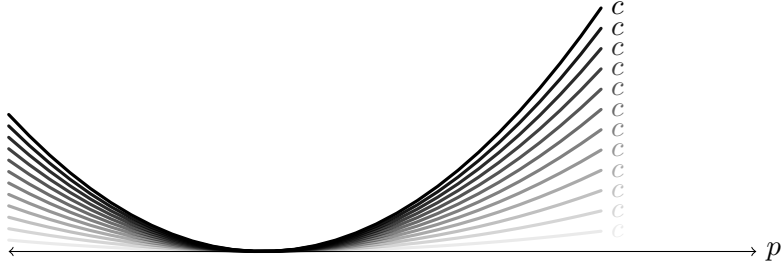
\begin{figure}
		\centering
		\begin{tikzpicture}[xscale=0.75,yscale=0.75,line cap=round]
	\pgfmathsetmacro{\xmax}{10.5};
	\pgfmathsetmacro{\yscale}{0.12};
	\pgfmathsetmacro{\groundmin}{4.5};
	\pgfmathsetmacro{\groundincrement}{0.25};
	\pgfmathsetmacro{\groundmax}{7.5};
	\draw[<->] (0,0) -- ({\xmax+\groundmax-\groundmin-\groundincrement},0);
	\draw ({\xmax+\groundmax-\groundmin-\groundincrement},0) node[anchor=west] {$p$};
	\pgfmathsetmacro{\groundminplusone}{\groundmin+\groundincrement};
	\foreach \ground in {\groundmin,\groundminplusone,...,\groundmax}
		{
		\pgfmathsetmacro{\opac}{1-(\ground-\groundmin)/(\groundmax-\groundmin)}
		\draw[domain=0:\xmax, variable=\x,
			samples=30, very thick, opacity={\opac}]
			plot ( { \x }, { \yscale*\opac*(\x-\groundmin)^2 } );
		\draw[opacity={\opac}] ({\xmax},{ \yscale*\opac*(\xmax-\groundmin)^2 }) node[anchor=west] {$c$};
		}
\end{tikzpicture}
		\caption{As $\alpha$ falls, $p \mapsto \alpha (p-\beta)^2$ shifts vertically downward.}
		\label{fig:confidence}
	\end{figure}
\end{example}

\begin{proof}[Proof of \Cref{proposition:conf}]
	The `if' part is trivial. For the `only if' part, let $\succeq$ and $\succeq'$ be moral-hazard preferences with parsimonious representations $(c,u)$ and $(c',u')$, respectively, and let $\succeq$ be more confident than $\succeq'$. Write $\sqsupseteq$ and $\sqsupseteq'$ for the inverses of $\succeq$ and $\succeq'$, respectively. By \Cref{observation:variational}, $(c,-u)$ and $(c',-u')$ are variational representations of $\sqsupseteq$ and of $\sqsupseteq'$, respectively. And by inspection, for any random remuneration $x \in \Delta(\Pi)$ and any contract $w \in W$, $x \sqsupseteq' w$ implies $x \sqsupseteq w$. Thus $u=u'$ and $c \leq c'$ by Proposition~9 in \textcite{MaccheroniMarinacciRustichini2006}.
\end{proof}

%%%%%%%%%%%%%%%%%%%%%%%%%%%%%%%%%%%
\subsection{Optimism}
\label{sec:conf_optim:optim}
%%%%%%%%%%%%%%%%%%%%%%%%%%%%%%%%%%%

An optimistic agent is one who expects output to be high. To capture this, assume that output levels are ordered: $S = \{s_1,s_2,\dots,s_{\abs*{S}}\}$, where $s_1 < s_2 < \cdots < s_{\abs*{S}}$.

In terms of contract choice behaviour, greater optimism is manifested by a greater propensity to choose `steeper' contracts, which pay relatively more following high output realisations than following low output realisations.

The appropriate formalisation of `steeper' requires correcting for risk attitude, by using units of utility rather than of money. Toward a definition, recall that by the expected-utility theorem \parencite{VonneumannMorgenstern1947}, if a relation $\succeq$ satisfies Axioms~\ref{axiom:weakorder}--\ref{axiom:continuity} and \hyperref[axiom:vnm-indep]{vNM Independence}, then there is exactly one strictly increasing and normalised function $u : \Pi \to \R$ such that for any random remunerations $x,x' \in \Delta(\Pi)$, $x \succeq x'$ holds if and only if $\int_\Pi u(\pi) x(\dd \pi) \geq \int_\Pi u(\pi) x'(\dd \pi)$. Given any such relation $\succeq$ and any two contracts $w,w' \in W$, we say that $w$ is \emph{$\succeq$-steeper than} $w'$ exactly if $s \mapsto u(w(s)) - u(w'(s))$ is increasing. Equivalently, $\frac{1}{2} w(s) + \frac{1}{2} w'(s') \succeq \frac{1}{2} w(s') + \frac{1}{2} w'(s)$ holds for all output levels $s \geq s'$ in $S$. This latter formulation furnishes a general definition that is applicable for \emph{any} relation $\succeq$:

\begin{definition}
	\label{definition:steeper}
	Fix a binary relation $\succeq$ on $W$ and two contracts $w,w' \in W$. We say that $w$ is \emph{$\succeq$-steeper than} $w'$ if and only if $\frac{1}{2} w(s) + \frac{1}{2} w'(s') \succeq \frac{1}{2} w(s') + \frac{1}{2} w'(s)$ holds for all output levels $s \geq s'$ in $S$.
\end{definition}

\begin{definition}
	\label{definition:optim}
	Let $\succeq$ and $\succeq'$ be binary relations on $W$. $\succeq$ is \emph{more optimistic than} $\succeq'$ if and only if whenever $w \succeq' \mathrel{(\succ')} w'$ for two contracts $w,w' \in W$ such that $w$ is $\succeq$-steeper than $w'$, we also have $w \succeq \mathrel{(\succ)} w'$.
\end{definition}

In other words, a more optimistic preference is one that is more disposed to prefer steeper contracts, where steepness is measured in utility units.

We shall characterise relative optimism in the moral-hazard model in terms of `up-shiftedness' of costs, defined as follows. Abbreviate `first-order stochastically dominates' to `FOSD'.

\begin{definition}
	\label{definition:upshift}
	Let $c,c' : \Delta(S) \to [0,\infty]$ be grounded, convex and lower semi-continuous. We say that $c$ is \emph{up-shifted from} $c'$ if and only if for any $p,p' \in \Delta(S)$, there are $q,q' \in \Delta(S)$ such that $p$ FOSD $q'$, $q$ FOSD $p'$, $\frac{1}{2} p + \frac{1}{2} p' = \frac{1}{2} q + \frac{1}{2} q'$, and $c(q) + c'(q') \leq c(p) + c'(p')$.
\end{definition}

Despite its tricky definition, up-shiftedness expresses a straightforward idea: that first-order stochastically higher output distributions are relatively cheaper under $c$ than under $c'$. Indeed, $c$ is up-shifted from $c'$ exactly if under any contract $w \in W$ and for any strictly increasing and normalised $u : \Pi \to \R$, the optimal choice of `effort' $p \in \Delta(S)$ in the parsimonious moral-hazard model $(c,u)$ is first-order stochastically higher than optimal `effort' $p' \in \Delta(S)$ in the parsimonious moral-hazard model $(c',u)$ \parencite[see][section~5.2]{DziewulskiQuah2024}.%
	\footnote{This statement assumes that optimal effort choices are unique. The general statement may be found in \textcite[section~5.2]{DziewulskiQuah2024}.}

\setcounter{example}{2}
\begin{example}[continued]
	\label{example:1d_horizontal}
	Recall that $c(p) = \alpha (p-\beta)^2$ for each $p \in [0,1]$, and note that $p$ FOSD $p'$ if and only if $p \geq p'$. As $\beta$ rises (holding $\alpha$ constant), $c$ shifts horizontally rightwards (see \Cref{fig:optimism}); formally, $c$ up-shifts.%
		\footnote{Fix $\alpha \in \R_+$ and $\beta > \beta'$ in $[0,1]$. For any $p,p' \in [0,1]$, it is easily verified that $q \coloneqq \max\{p,p'\}$ and $q' \coloneqq \min\{p,p'\}$ satisfy the properties in \Cref{definition:upshift}.}

	\begin{figure}
		\centering
		\begin{tikzpicture}[xscale=0.75,yscale=0.75,line cap=round]
	\pgfmathsetmacro{\xmax}{10.5};
	\pgfmathsetmacro{\yscale}{0.12};
	\pgfmathsetmacro{\groundmin}{4.5};
	\pgfmathsetmacro{\groundincrement}{0.25};
	\pgfmathsetmacro{\groundmax}{7.5};
	\draw[<->] (0,0) -- ({\xmax+\groundmax-\groundmin-\groundincrement},0);
	\draw ({\xmax+\groundmax-\groundmin-\groundincrement},0) node[anchor=west] {$p$};
	\pgfmathsetmacro{\groundminplusone}{\groundmin+\groundincrement};
	\foreach \ground in {\groundmin,\groundminplusone,...,\groundmax}
		{
		\pgfmathsetmacro{\opac}{1-(\ground-\groundmin)/(\groundmax-\groundmin)}
		\draw[domain=0:\xmax, variable=\x,
			samples=30, very thick, opacity={\opac}]
			plot ( { \x + \ground - \groundmin }, { \yscale*(\x-\groundmin)^2 } );
		\draw[opacity={\opac}] ({ \xmax + \ground - \groundmin },{ \yscale*(\xmax-\groundmin)^2 }) node[anchor=south] {$c$};
		}
\end{tikzpicture}
		\caption{As $\beta$ rises, $p \mapsto \alpha (p-\beta)^2$ shifts horizontally rightward.}
		\label{fig:optimism}
	\end{figure}
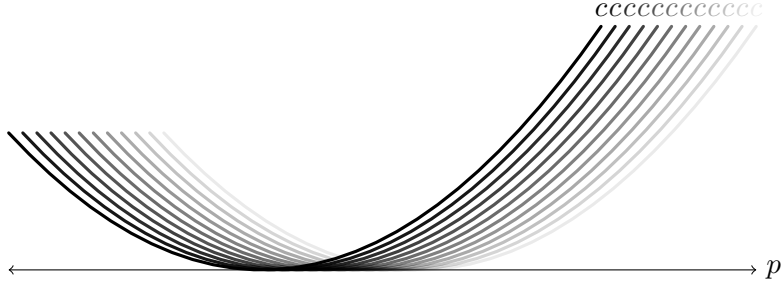
\end{example}

There is an intuitive necessary (though not sufficient) condition for up-shiftedness in terms of the cost level sets
\begin{equation*}
	L_k \coloneqq \{ p \in \Delta(S) : c(p) \leq k \}
	\quad \text{and} \quad
	L_k' \coloneqq \{ p \in \Delta(S) : c'(p) \leq k \} .
\end{equation*}

\begin{observation}
	\label{lemma:shift_wso}
	Let $c,c' : \Delta(S) \to [0,\infty]$ be grounded, convex and lower semi-continuous. If $c$ is up-shifted from $c'$, then for every $k \geq 0$,
	\begin{equation*}
		\begin{aligned}
			&\text{for each $p \in L_k$,}\;
			&&\text{$p$ FOSD $p'$ for some $p' \in L_k'$,}\;
			&&\text{and}
			\\
			&\text{for each $p' \in L_k'$,}\;
			&&\text{$p$ FOSD $p'$ for some $p \in L_k$.}
			&&
		\end{aligned}
	\end{equation*}
\end{observation}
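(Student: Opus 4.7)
The plan is to unpack the definition of up-shiftedness directly, pairing the given level-set element with a minimiser of the other cost function so that one of the two cost terms in the inequality $c(q) + c'(q') \leq c(p) + c'(p')$ vanishes.

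First I would observe that groundedness of $c$ and $c'$, together with their lower semi-continuity on the compact simplex $\Delta(S)$, ensures that each attains its infimum of $0$. So fix $p^c, p^{c'} \in \Delta(S)$ with $c(p^c) = 0 = c'(p^{c'})$.

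For part (i), fix $k \geq 0$ and $p \in L_k$, i.e.\ $c(p) \leq k$. Apply \Cref{definition:upshift} to the pair $(p, p^{c'})$: there exist $q, q' \in \Delta(S)$ with $p$ FOSD $q'$, $q$ FOSD $p^{c'}$, $\frac{1}{2}p + \frac{1}{2}p^{c'} = \frac{1}{2}q + \frac{1}{2}q'$, and
\begin{equation*}
    c(q) + c'(q') \,\leq\, c(p) + c'(p^{c'}) \,=\, c(p) \,\leq\, k.
\end{equation*}
In particular $c'(q') \leq k$, so $q' \in L_k'$, and $p$ FOSD $q'$, proving the first conclusion. Part (ii) is symmetric: for $p' \in L_k'$, apply \Cref{definition:upshift} to $(p^c, p')$ to produce $q, q' \in \Delta(S)$ with $q$ FOSD $p'$ and $c(q) + c'(q') \leq c(p^c) + c'(p') \leq k$, so in particular $q \in L_k$ works.

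There is no real obstacle here; the statement is essentially an immediate consequence of the definition, with the only substantive content being the use of groundedness to kill one of the two cost terms. I would keep the write-up short, noting just the existence of the minimisers and then running the two symmetric applications of \Cref{definition:upshift}.
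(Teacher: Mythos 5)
Your proof is correct and follows exactly the same route as the paper's: use groundedness and lower semi-continuity to pick a zero-cost minimiser of the \emph{other} cost function, apply \Cref{definition:upshift} to the resulting pair, and read off the relevant FOSD relation and level-set membership from the cost inequality. The paper writes out only the first half and declares the second analogous, whereas you spell out both; there is no substantive difference.
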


In other words, the set $L_k$ of output distributions that are (believed to be) attainable at cost $\leq k$ under $c$ is `FOSD higher' than the set $L_k'$ of output distributions that are (believed to be) attainable at cost $\leq k$ under $c$.%
	\footnote{This way of comparing sets is known as the \emph{weak set order} \parencite[][section 2.4]{Topkis1998}.}

\begin{proof}
	We prove the first half, omitting the analogous argument for the second half. Fix a $k \geq 0$ and a $p \in L_k$. Since $c'$ is grounded and lower semi-continuous, we may choose a $p' \in \Delta(S)$ such that $c'(p')=0$. By up-shiftedness, there are $q,q' \in \Delta(S)$ such that $p$ FOSD $q'$ and $c'(q') \leq c(q) + c'(q') \leq c(p) + c'(p') \leq k$, so that $q' \in L_k'$.
\end{proof}

We now show that greater optimism is manifested in the moral-hazard model by first-order stochastically higher distributions becoming relatively cheaper, in the sense of up-shiftedness.

\begin{proposition}
	\label{proposition:optim}
	Let $\succeq$ and $\succeq'$ be unbounded moral-hazard preferences, with parsimonious representations $(c,u)$ and $(c',u')$. Then $\succeq$ is more optimistic than $\succeq'$ if and only if $u=u'$ and $c$ is up-shifted from $c'$.
\end{proposition}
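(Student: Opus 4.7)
The plan is to follow the blueprint of the proof of \Cref{proposition:overconf}: first pin down $u=u'$ from behaviour on constant contracts; then translate the more-optimistic condition into an inequality comparing the dual value functions $\phi_c(F) \coloneqq \max_{p \in \Delta(S)} \bigl[ \sum_{s \in S} F(s) p(s) - c(p) \bigr]$ and $\phi_{c'}$; and finally invoke the monotone-comparative-statics characterisation of \textcite{DziewulskiQuah2024} in place of Proposition~9 of \textcite{MaccheroniMarinacciRustichini2006}. For $u=u'$: for any constant contracts $x,y \in \Delta(\Pi)$, the condition in \Cref{definition:steeper} reduces by reflexivity of $\succeq$ to a tautology, so each is $\succeq$-steeper than the other; applying \Cref{definition:optim} in both directions gives $x \succeq' y \Leftrightarrow x \succeq y$, forcing $u=u'$ as both are strictly increasing and normalised. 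Unboundedness then yields $u(\Delta(\Pi))=\R$, so every $F : S \to \R$ equals $u \circ w$ for some $w \in W$, and $w$ is $\succeq$-steeper than $w'$ iff $F-F'$ is weakly increasing on $S$. Thus the more-optimistic condition is equivalent to: for all $F,F' : S \to \R$ with $F-F'$ increasing, $\phi_{c'}(F) \geq \phi_{c'}(F')$ implies $\phi_c(F) \geq \phi_c(F')$.

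For the ``if'' direction, assume $c$ is up-shifted from $c'$. By \textcite[section~5.2]{DziewulskiQuah2024}, the maximiser correspondence $P^*_c(F) \coloneqq \argmax_{p \in \Delta(S)} [ \sum_s F(s) p(s) - c(p) ]$ stands in weak-set-order FOSD dominance over $P^*_{c'}(F)$ at every $F$. Along the segment $F_t \coloneqq (1-t) F' + t F$, $t \in [0,1]$, the convex functions $\phi_c, \phi_{c'}$ are absolutely continuous, and the envelope theorem furnishes measurable selections $p_t \in P^*_c(F_t)$ and $q_t \in P^*_{c'}(F_t)$ with $\phi_c(F) - \phi_c(F') = \int_0^1 \sum_s (F(s) - F'(s)) p_t(s) \dd t$ and analogously for $\phi_{c'}$. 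FOSD dominance of $p_t$ over $q_t$, combined with $F-F'$ increasing, gives $\sum_s (F(s) - F'(s)) p_t(s) \geq \sum_s (F(s) - F'(s)) q_t(s)$ for (almost) every $t$, so $\phi_c(F) - \phi_c(F') \geq \phi_{c'}(F) - \phi_{c'}(F')$, which delivers the required implication. For the ``only if'' direction, suppose $\succeq$ is more optimistic than $\succeq'$. Given $F,F'$ with $F-F'$ increasing, pick $k \in \R$ such that $\phi_{c'}(F' + k) = \phi_{c'}(F)$ (possible since $\phi_{c'}$ is continuous and commutes with adding constants); since $F - (F' + k)$ remains increasing, applying the more-optimistic implication with $F'+k$ in place of $F'$ gives $\phi_c(F) \geq \phi_c(F'+k) = \phi_c(F') + k = \phi_c(F') + \phi_{c'}(F) - \phi_{c'}(F')$. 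Specialising to $F' = F - \epsilon G$ for increasing $G : S \to \{0,1\}$, dividing by $\epsilon$ and sending $\epsilon \downarrow 0$ through one-sided directional derivatives of convex functions then yields the FOSD-comparison of the maximiser sets characterised by DQ2024, whence up-shiftedness of $c$ from $c'$.

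The main obstacle is the precise calibration of this last limiting step with the exact form of the DQ2024 equivalence. The directional-derivative argument delivers information about extremal values $\max_{p \in P^*_c(F)} \sum_s G(s) p(s)$ and $\min_{p \in P^*_c(F)} \sum_s G(s) p(s)$ for increasing $\{0,1\}$-valued $G$, whereas the DQ2024 result characterises up-shiftedness in terms of a weak-set-order FOSD comparison of the full maximiser sets; matching these---in particular when maximisers are non-unique---requires a modest compactness or exposed-face argument exploiting convexity and compactness of $P^*_c(F)$ and $P^*_{c'}(F)$ (which are guaranteed by convexity and lower semi-continuity of $c$ and $c'$). A secondary technical point is the measurability of the envelope selection $t \mapsto p_t$, which is standard for convex optimisation problems with lsc objective but should be flagged.
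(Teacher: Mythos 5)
Your reduction is the same as the paper's up to the last step: you pin down $u=u'$ from constant contracts, use unboundedness to translate ``more optimistic'' into the statement that for all $F,F':S\to\R$ with $F-F'$ increasing, $\phi_{c'}(F)\geq\phi_{c'}(F')$ implies $\phi_c(F)\geq\phi_c(F')$, and then use the constant-shift trick (choosing $k$ with $\phi_{c'}(F'+k)=\phi_{c'}(F)$) to upgrade this to the unconditional difference inequality $\phi_c(F)-\phi_c(F')\geq\phi_{c'}(F)-\phi_{c'}(F')$. This is exactly the chain of equivalences in the paper's \Cref{lemma:paweljohn}. Where you diverge is in how you connect that difference inequality to up-shiftedness: the paper closes the loop in one stroke by citing Proposition~9 of \textcite{DziewulskiQuah2024}, which states precisely that the difference inequality (for all $F-F'$ increasing) is \emph{equivalent} to $c$ being up-shifted from $c'$ in the sense of \Cref{definition:upshift}. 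You instead try to re-derive both directions of that equivalence through comparative statics of the argmax correspondences, and that is where the proof breaks down.

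Concretely, your ``only if'' direction ends with a genuine gap, and it is not the ``modest'' one you describe. Taking directional derivatives of the difference inequality along increasing indicator directions $G$ gives you, for each such $G$ separately, a comparison of extremal values $\max_{p\in P^*_c(F)}\sum_s G(s)p(s)$ versus the corresponding quantities for $c'$. But FOSD dominance of one distribution over another requires \emph{all} upper-set probabilities to be ordered \emph{simultaneously} for a single pair $(p,q)$; a family of per-$G$ comparisons, each witnessed by possibly different elements of the argmax sets, does not assemble into a weak-set-order FOSD comparison without substantial additional work---and even if it did, you would then still need the converse direction of the argmax characterisation to pass from the weak set order on maximisers (for all $F$) back to up-shiftedness of the cost functions, which is itself a nontrivial result of \textcite{DziewulskiQuah2024} that you would have to cite or prove. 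In short, you are attempting to reprove the hard content of their Proposition~9 from scratch, and the reconstruction is incomplete exactly where that proposition does the work. A smaller, fixable issue in your ``if'' direction: the weak set order does not make \emph{arbitrary} measurable selections $p_t, q_t$ pairwise FOSD-comparable, so the inequality $\sum_s(F-F')(s)p_t(s)\geq\sum_s(F-F')(s)q_t(s)$ does not follow as stated; it can be rescued because at a.e.\ $t$ the derivative $\sum_s(F-F')(s)p(s)$ is constant over $p\in P^*_c(F_t)$, so one may evaluate it at a dominating pair supplied by the weak set order. But the clean route---and the one the paper takes---is to bypass the argmax correspondences entirely and invoke the value-function characterisation of up-shiftedness directly.
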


The proof, given in \cref{app:pf_optim}, turns on a result due to \textcite{DziewulskiQuah2024}.

%%%%%%%%%%%%%%%%%%%%%%%%%%%%%%%%%%%
\subsection{Distinguishability of confidence and optimism}
\label{sec:conf_optim:distinguish}
%%%%%%%%%%%%%%%%%%%%%%%%%%%%%%%%%%%

Relative confidence and relative optimism are logically independent: by \Cref{example:1d_vertical}, a preference can be more confident than another without being more optimistic, and vice versa.%
	\footnote{Explicitly, there exist binary relations $\succeq$ and $\succeq'$ on $W$ such that $\succeq$ is more confident but not more optimistic than $\succeq'$, and there exist relations $\succeq$ and $\succeq'$ on $W$ such that $\succeq$ is more optimistic but not more confident than $\succeq'$; furthermore, these can be chosen to satisfy Axioms~\ref{axiom:weakorder}--\ref{axiom:continuity}, \hyperref[axiom:mmr-indep]{MMR Independence}, and \hyperref[axiom:quasiconvexity]{Quasiconvexity}.}
In other words, relative confidence is empirically distinguishable from relative confidence.

This conclusion rests in part on the richness of the data $\succeq$, which records the agent's choices between all pairs $w,w' \in W$ of contracts. On coarser data, distinguishability may fail. To explore this point, suppose that data is available only on choices between pairs of contracts belonging to $W^\star \subseteq W$, where $W^\star$ contains the constant contracts (that is, $W^\star \supseteq \Delta(\Pi)$). For binary relations $\succeq$ and $\succeq'$ on $W$, say that $\succeq$ is \emph{more confident on $W^\star$ than} $\succeq'$ if and only if whenever $w \succeq' \mathrel{(\succ')} x$ for a contract $w \in W^\star$ and a constant contract $x \in \Delta(\Pi)$, we also have $w \succeq \mathrel{(\succ)} x$, and say that $\succeq$ is \emph{more optimistic on $W^\star$ than} $\succeq'$ if and only if whenever $w \succeq' \mathrel{(\succ')} w'$ for two contracts $w,w' \in W^\star$ such that $w$ is $\succeq$-steeper than $w'$, we also have $w \succeq \mathrel{(\succ)} w'$.

Whether `more confident on $W^\star$ than' implies `more optimistic on $W^\star$ than', or vice versa, depends on $W^\star \subseteq W$. Consider, for example, the set $W^{\uparrow}$ of all \emph{increasing} contracts, meaning those $w \in W$ such that $w(s)$ first-order stochastically dominates $w(s')$ whenever $s \geq s'$.

\begin{observation}
	\label{observation:increasing}
	For binary relations $\succeq$ and $\succeq'$ on $W$ that satisfy Axioms~\ref{axiom:weakorder}--\ref{axiom:continuity} and \hyperref[axiom:vnm-indep]{vNM Independence}, if $\succeq$ is more optimistic on $W^{\uparrow}$ than $\succeq'$, then $\succeq$ is more confident on $W^{\uparrow}$ than $\succeq'$.%
		\footnote{I thank an anonymous referee for this observation.}
\end{observation}

Thus data on increasing contracts alone is insufficient to distinguish relative confidence from relative optimism.

\begin{proof}
	Fix a contract $w \in W^\star$ and a constant contract $x \in \Delta(\Pi)$ such that $w \succeq' \mathrel{(\succ')} x$; we must show that if $\succeq$ is more optimistic on $W^\uparrow$ than $\succeq'$, then $w \succeq \mathrel{(\succ)} x$. By the expected-utility theorem \parencite{VonneumannMorgenstern1947}, there is exactly one strictly increasing and normalised function $u : \Pi \to \R$ such that for any random remunerations $x',x'' \in \Delta(\Pi)$, $x' \succeq x''$ holds if and only if $\int_\Pi u(\pi) x'(\dd \pi) \geq \int_\Pi u(\pi) x''(\dd \pi)$. Since $u$ is strictly increasing and $w$ is increasing, $w$ is $\succeq$-steeper than $x$. Thus if $\succeq$ is more optimistic on $W^\uparrow$ than $\succeq'$, then $w \succeq \mathrel{(\succ)} x$.
\end{proof}

%%%%%%%%%%%%%%%%%%%%%%%%%%%%%%%%%%%
\subsection{Absolute overconfidence and optimism}
\label{sec:conf_optim:absolute}
%%%%%%%%%%%%%%%%%%%%%%%%%%%%%%%%%%%

A remaining question is what it means for an agent to be \emph{overconfident} or \emph{optimistic,} in an absolute sense (rather than relative to some other agent). I conclude by offering one possible answer to this question, in the spirit of \textcite{Epstein1999,GhirardatoMarinacci2002}.

Whereas our definitions of `more confident than' and `more optimistic than' were purely behavioural, making no reference to objective facts about the agent's ability to influence output, absolute overconfidence and optimism are concerned precisely with discrepancies between such objective facts and the agent's subjective assessment of them. We must therefore fix an `objective' benchmark against which the agent's behaviour may be compared. Although it is natural to fix as our benchmark the first three parameters $(E^\star,C^\star,e \mapsto P_e^\star)$ of a standard four-parameter moral-hazard model, what will actually matter is only the objective output-distribution cost $c^\star : \Delta(S) \to [0,\infty]$.%
	\footnote{The relationship is $c^\star(p) = \inf_{\mu \in \Delta(E^\star)} \{ \int_{E^\star} C^\star(e) \mu(\dd e) : \int_{E^\star} P_e^\star \mu(\dd e) = p \}$ for every $p \in \Delta(S)$, where $\inf \varnothing = \infty$ by convention. This was shown in the proof of \Cref{lemma:parsimonious}.}

\begin{definition}
	\label{definition:abs}
	Let $c^\star : \Delta(S) \to [0,\infty]$ be the objective output-distribution cost, and let $\succeq$ be a binary relation on $W$ that satisfies Axioms~\ref{axiom:weakorder}--\ref{axiom:continuity} and \hyperref[axiom:vnm-indep]{vNM Independence}. Write $u : \Pi \to \R$ for the (unique) normalised utility function of $\succeq$, and let $\succeq^\star$ be the moral-hazard preference whose parsimonious representation is $(c^\star,u)$. We say that $\succeq$ is \emph{overconfident (optimistic)} if and only if it is more confident (more optimistic) than $\succeq^\star$.
\end{definition}

In other words, an overconfident (optimistic) preference is one that is more confident (more optimistic) than is objectively warranted.

\begin{corollary}
	\label{corollary:abs}
	Let $c^\star : \Delta(S) \to [0,\infty]$ be the objective output-distribution cost, and let $\succeq$ be an unbounded moral-hazard preference, with parsimonious representation $(c,u)$. Then $\succeq$ is overconfident if and only if $c \leq c^\star$, and $\succeq$ is optimistic if and only if $c$ is up-shifted from $c^\star$.
\end{corollary}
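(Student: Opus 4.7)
The plan is to derive the corollary by direct specialisation of \Cref{proposition:overconf} and \Cref{proposition:optim} to the case where the ``other'' preference is the objective preference $\succeq^\star$ introduced in \Cref{definition:abs}.

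First, I would verify that $\succeq^\star$ is a moral-hazard preference with parsimonious representation $(c^\star, u)$: this is immediate from \Cref{definition:abs}, since the definition is precisely phrased in those terms, and \Cref{lemma:parsimonious} guarantees that such a preference exists given that $c^\star$ has the required properties (grounded, convex, lower semi-continuous) and $u$ is strictly increasing and normalised. The latter follows from \Cref{axiom:m-mon} plus the characterisation of $u$ under \hyperref[axiom:vnm-indep]{vNM Independence}.

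For the overconfidence half, I would apply \Cref{proposition:overconf} with $\succeq' \coloneqq \succeq^\star$ and $(c',u') \coloneqq (c^\star, u)$. By \Cref{definition:abs}, $\succeq$ is overconfident iff it is more confident than $\succeq^\star$, which by \Cref{proposition:overconf} holds iff $u = u$ (automatic) and $c \leq c^\star$. For the optimism half, I would analogously apply \Cref{proposition:optim}. The only extra check needed is that $\succeq^\star$ is unbounded, so that \Cref{proposition:optim} is applicable. But by the discussion following \Cref{definition:unbound}, unboundedness of a moral-hazard preference is equivalent to $u(\Delta(\Pi)) = \R$, a property of the utility function alone; since $\succeq$ and $\succeq^\star$ share the utility function $u$, and $\succeq$ is unbounded by hypothesis, $\succeq^\star$ is unbounded as well. \Cref{proposition:optim} then yields that $\succeq$ is optimistic iff $c$ is up-shifted from $c^\star$.

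There is no real obstacle: the work has been done in \Cref{proposition:overconf} and \Cref{proposition:optim}. The only bookkeeping is to match the utility functions (which is automatic once $\succeq^\star$ is defined using the $u$ recovered from $\succeq$ via \hyperref[axiom:vnm-indep]{vNM Independence}) and to propagate unboundedness from $\succeq$ to $\succeq^\star$, both of which are essentially immediate.
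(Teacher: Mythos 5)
Your proposal is correct and matches the paper's (implicit) argument: the corollary is intended as a direct specialisation of \Cref{proposition:overconf} and \Cref{proposition:optim} to $\succeq' = \succeq^\star$ with representation $(c^\star,u)$, exactly as you do. Your explicit check that unboundedness transfers from $\succeq$ to $\succeq^\star$ (since it depends only on the shared $u$) is a worthwhile piece of bookkeeping that the paper leaves unstated.
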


Analogous definitions and characterisations may be given for (absolute) \emph{underconfidence} and \emph{pessimism.}

%______________________________________________________________________________

%       _                               _ _
%      / \   _ __  _ __   ___ _ __   __| (_) ___ ___  ___
%     / _ \ | '_ \| '_ \ / _ \ '_ \ / _` | |/ __/ _ \/ __|
%    / ___ \| |_) | |_) |  __/ | | | (_| | | (_|  __/\__ \
%   /_/   \_\ .__/| .__/ \___|_| |_|\__,_|_|\___\___||___/
%           |_|   |_|

\begin{appendices}

\crefalias{section}{appsec}
\crefalias{subsection}{appsec}
\crefalias{subsubsection}{appsec}

%%%%%%%%%%%%%%%%%%%%%%
%%%%%%%%%%%%%%%%%%%%%%
\section{Proof of \texorpdfstring{\Cref{proposition:optim}}{Proposition ref{proposition:optim}}}
\label{app:pf_optim}
%%%%%%%%%%%%%%%%%%%%%%
%%%%%%%%%%%%%%%%%%%%%%

The proof relies on an observation and a lemma. The lemma is inspired by, and proved using, a result due to \textcite{DziewulskiQuah2024}.

\begin{observation}
	\label{observation:uw_f}
	If $u : \Delta(\Pi) \to \R$ is onto, then for every function $f : S \to \R$, there is a contract $w \in W$ such that $f(s) = u(w(s))$ for every $s \in S$.
\end{observation}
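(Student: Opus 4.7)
The plan is to exploit surjectivity pointwise. Since $u : \Delta(\Pi) \to \R$ is onto and $f$ takes values in $\R$, for each output level $s \in S$ the pre-image $u^{-1}(\{f(s)\}) \subseteq \Delta(\Pi)$ is non-empty, so we may pick some $x_s \in \Delta(\Pi)$ with $u(x_s) = f(s)$. Define the map $w : S \to \Delta(\Pi)$ by $w(s) \coloneqq x_s$ for every $s \in S$; this is a contract by definition. Then $u(w(s)) = u(x_s) = f(s)$ for each $s \in S$, which is precisely the conclusion.

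There is essentially no obstacle: the argument is a direct unpacking of surjectivity, and since $S$ is finite the pointwise selection requires nothing beyond finitely many choices. The only thing worth flagging is that we are using the \emph{extended} $u : \Delta(\Pi) \to \R$ (the expected-utility extension defined in \cref{sec:model:prelim}), not the restriction $u : \Pi \to \R$; surjectivity of the extended $u$ is exactly what makes it possible to hit every real value $f(s)$ with a random remuneration $x_s \in \Delta(\Pi)$, whereas the original $u : \Pi \to \R$ need not be onto $\R$.
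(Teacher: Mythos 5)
Your proof is correct and is exactly the intended argument: the paper states this as an unproved observation precisely because it follows immediately from surjectivity via a pointwise (finite) selection, as you do. Your remark that the relevant map is the expected-utility extension $u : \Delta(\Pi) \to \R$ rather than $u : \Pi \to \R$ is also the right point to flag, since unboundedness of the preference is what guarantees $u(\Delta(\Pi)) = \R$ when the observation is invoked.
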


\begin{lemma}
	\label{lemma:paweljohn}
	Let $c,c' : \Delta(S) \to [0,\infty]$ be grounded, convex, lower semi-continuous. The following are equivalent:
	
	\begin{enumerate}[label=(\alph*)]

		\item $c$ is up-shifted from $c'$.
	
		\item \label{item:paweljohn1}
		For any $f,f' : S \to \R$ such that $f-f'$ is increasing,
		\begin{equation*}
			\max_{p \in \Delta(S)} \left[
			-c'(p) + \sum_{s \in S} f(s) p(s)
			\right]
			\geq \mathrel{(>)}
			\max_{p \in \Delta(S)} \left[
			-c'(p) + \sum_{s \in S} f'(s) p(s)
			\right]
		\end{equation*}
		implies
		\begin{equation*}
			\max_{p \in \Delta(S)} \left[
			-c(p) + \sum_{s \in S} f(s) p(s)
			\right]
			\geq \mathrel{(>)}
			\max_{p \in \Delta(S)} \left[
			-c(p) + \sum_{s \in S} f'(s) p(s)
			\right] .
		\end{equation*}

	\end{enumerate}
\end{lemma}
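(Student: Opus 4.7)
The plan is to interpose an intermediate condition $(\dagger)$: for every $f : S \to \R$, the argmax set $M_c(f) \coloneqq \argmax_{p \in \Delta(S)} \left[ -c(p) + \sum_{s \in S} f(s) p(s) \right]$ first-order stochastically dominates $M_{c'}(f)$ in the weak set order (cf.\ the footnote after \Cref{lemma:shift_wso}). The theorem of \textcite{DziewulskiQuah2024} invoked in the body (just after \Cref{definition:upshift}) gives (a) $\Leftrightarrow$ $(\dagger)$, so it suffices to verify (b) $\Leftrightarrow$ $(\dagger)$.

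For $(\dagger) \Rightarrow$ (b): define the value function $V_c(f) \coloneqq \max_{p \in \Delta(S)} \left[ -c(p) + \sum_{s \in S} f(s) p(s) \right]$. As a pointwise supremum of affine functions of $f$, $V_c$ is convex, and its subdifferential at $f$ equals $M_c(f)$ by standard Fenchel duality. Given $f, f'$ with $f - f'$ increasing, set $f_\alpha \coloneqq (1-\alpha) f' + \alpha f$; then $\alpha \mapsto V_c(f_\alpha)$ is convex and thus differentiable at almost every $\alpha \in [0,1]$, with envelope derivative $\sum_{s \in S} \left( f(s) - f'(s) \right) p_c(\alpha, s)$ at any $\alpha$ where the argmax is the singleton $\{p_c(\alpha)\}$; the same applies to $V_{c'}$. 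By $(\dagger)$, $p_c(\alpha)$ FOSD $p_{c'}(\alpha)$, and since $f - f'$ is increasing this delivers $\tfrac{\dd}{\dd \alpha} V_c(f_\alpha) \geq \tfrac{\dd}{\dd \alpha} V_{c'}(f_\alpha)$ at a.e.\ $\alpha$. Integration yields $V_c(f) - V_c(f') \geq V_{c'}(f) - V_{c'}(f')$, which immediately delivers both the weak and strict clauses of (b).

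For (b) $\Rightarrow (\dagger)$: fix $f$ and any increasing $g : S \to \R$. For $\epsilon > 0$, put $K_\epsilon \coloneqq V_{c'}(f + \epsilon g) - V_{c'}(f)$; since adding a constant to the reward shifts the value by the same constant, $V_{c'}(f + \epsilon g - K_\epsilon) = V_{c'}(f)$, while $(f + \epsilon g - K_\epsilon) - f = \epsilon g - K_\epsilon$ is increasing. The weak clause of (b) then gives $V_c(f + \epsilon g - K_\epsilon) \geq V_c(f)$, i.e., $V_c(f + \epsilon g) - V_c(f) \geq V_{c'}(f + \epsilon g) - V_{c'}(f)$. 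Dividing by $\epsilon$ and letting $\epsilon \downarrow 0$ gives the one-sided directional derivative inequality $V_c'(f; g) \geq V_{c'}'(f; g)$; by the standard formula $V'(f; g) = \max_{p \in M(f)} \sum_{s \in S} g(s) p(s)$, this reads $\max_{p \in M_c(f)} \sum_{s \in S} g(s) p(s) \geq \max_{p \in M_{c'}(f)} \sum_{s \in S} g(s) p(s)$. A parallel argument with the pair $(f - K'_\epsilon, f - \epsilon g)$, where $K'_\epsilon \coloneqq V_{c'}(f) - V_{c'}(f - \epsilon g)$, yields the analogous inequality for minima. Holding for every increasing $g$, these two inequalities together encode the weak-set-order FOSD ordering of $M_c(f)$ over $M_{c'}(f)$ that $(\dagger)$ demands.

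The principal obstacle is the (b) $\Rightarrow (\dagger)$ direction: one must promote the scalar value-function inequality in (b) to a set ordering of argmax correspondences. The main devices are the constant-shift trick above, which equalises $V_{c'}$ on the two sides so that (b) can be leveraged in both the $+g$ and $-g$ directions, and the one-sided directional-derivative formula from convex analysis, which translates the resulting scalar inequalities into the required max/min statements on $M_c(f)$ and $M_{c'}(f)$.
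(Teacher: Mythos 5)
Your proof is correct in substance but takes a genuinely different route from the paper's. The paper never touches the argmax correspondence: it flips signs to restate \ref{item:paweljohn1} as a statement about minima, uses the constant-shift trick (replacing $f$ by $f-k$) to show that the resulting implication is equivalent to the \emph{increasing-differences} inequality $V_c(f)-V_c(f')\geq V_{c'}(f)-V_{c'}(f')$ for all $f,f'$ with $f-f'$ increasing, and then cites Proposition~9 of \textcite{DziewulskiQuah2024}, which says precisely that this value-function inequality characterises up-shiftedness. You instead route through the comparative-statics condition $(\dagger)$ on argmax sets, citing the other Dziewulski--Quah result (the one the paper quotes informally after \Cref{definition:upshift}), and then translate between $(\dagger)$ and the value-function statement by hand with convex analysis. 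Notably, your integrated inequality in the $(\dagger)\Rightarrow$(b) direction \emph{is} the paper's intermediate condition, and your $K_\eps$ device is the same constant-shift trick --- so your argument in effect re-derives the content of Proposition~9 from the argmax characterisation via envelope and directional-derivative arguments. What this buys is self-containedness of the value-function-to-argmax passage; what it costs is length and two steps that need more care than you give them.

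Two points need tightening. First, in $(\dagger)\Rightarrow$(b) you invoke the envelope derivative ``at any $\alpha$ where the argmax is the singleton $\{p_c(\alpha)\}$''; almost-everywhere differentiability of the convex function $\alpha\mapsto V_c(f_\alpha)$ does \emph{not} give singleton argmaxes a.e. The fix is standard: at a differentiability point the functional $p\mapsto\sum_s(f(s)-f'(s))p(s)$ is constant on $M_c(f_\alpha)$, so the derivative formula holds for \emph{any} selection, and only one half of the weak set order is then needed to compare derivatives. Second, and more substantively, the last sentence of your (b)$\Rightarrow(\dagger)$ step --- that the two support-function inequalities $\max_{M_c(f)}\langle g,\cdot\rangle\geq\max_{M_{c'}(f)}\langle g,\cdot\rangle$ and $\min_{M_c(f)}\langle g,\cdot\rangle\geq\min_{M_{c'}(f)}\langle g,\cdot\rangle$ for all increasing $g$ ``encode'' the weak-set-order FOSD ranking --- is asserted, not proved. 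It is true because $M_c(f)$ and $M_{c'}(f)$ are compact and convex and the increasing functions form a closed convex cone: if some $q\in M_{c'}(f)$ were FOSD-dominated by no element of $M_c(f)$, strict separation of $M_c(f)$ from the closed convex set $\{p:p\text{ FOSD }q\}$ would produce an increasing $g$ violating the max inequality (boundedness of the separating functional on the dominance cone forces it to lie in the dual cone, i.e.\ to be increasing), and symmetrically for the min inequality. You should supply this separation argument; without it the key step of your harder direction is a bare claim.
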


\begin{proof}[Proof of \Cref{proposition:optim}]
	Suppose that $\succeq$ is more optimistic than $\succeq'$. Then $u=u'$ since for any constant contracts $x,x' \in \Delta(\Pi)$, $x$ is steeper than $x'$ and $x'$ is steeper than $x$. By \Cref{observation:uw_f} (applicable since $\succeq$ and $\succeq'$ are unbounded), property~\ref{item:paweljohn1} holds. Hence by \Cref{lemma:paweljohn}, $c$ is up-shifted from $c'$.

	Suppose that $u=u'$ and that $c$ is up-shifted from $c'$. Then property~\ref{item:paweljohn1} holds by \Cref{lemma:paweljohn}. Since $\succeq$ and $\succeq'$ are unbounded, it follows by \Cref{observation:uw_f} that $\succeq$ is more optimistic than $\succeq'$.
\end{proof}

\begin{proof}[Proof of \Cref{lemma:paweljohn}]
	Property~\ref{item:paweljohn1} is equivalent to:
	
	\begin{enumerate}[label=(\alph*)]

		\setcounter{enumi}{2}

		\item \label{item:paweljohn3}
		For any $f,f' : S \to \R$ such that $f-f'$ is increasing,
		\begin{align*}
			\min_{p \in \Delta(S)} \left[
			c'(p) + \sum_{s \in S} f(s) p(s)
			\right]
			&\geq \mathrel{(>)}
			\min_{p \in \Delta(S)} \left[
			c'(p) + \sum_{s \in S} f'(s) p(s)
			\right]
			\\
			\Longrightarrow\quad
			\min_{p \in \Delta(S)} \left[
			c(p) + \sum_{s \in S} f(s) p(s)
			\right]
			&\geq \mathrel{(>)}
			\min_{p \in \Delta(S)} \left[
			c(p) + \sum_{s \in S} f'(s) p(s)
			\right] .
		\end{align*}

	\end{enumerate}

	\noindent
	We claim that property~\ref{item:paweljohn3} is equivalent to:

	\begin{enumerate}[label=(\alph*),resume]

		\item \label{item:paweljohn4}
		For any $f,f' : S \to \R$ such that $f-f'$ is increasing,
		\begin{align*}
			\min_{p \in \Delta(S)} \left[
			c(p) + \sum_{s \in S} f(s) p(s)
			\right]
			&-
			\min_{p \in \Delta(S)} \left[
			c(p) + \sum_{s \in S} f'(s) p(s)
			\right] 
			\\
			\geq
			\min_{p \in \Delta(S)} \left[
			c'(p) + \sum_{s \in S} f(s) p(s)
			\right]
			&-
			\min_{p \in \Delta(S)} \left[
			c'(p) + \sum_{s \in S} f'(s) p(s)
			\right] .
		\end{align*}

	\end{enumerate}
	It is clear that \ref{item:paweljohn4} implies \ref{item:paweljohn3}. Conversely, if \ref{item:paweljohn4} fails for some $f$ and $f'$ such that $f-f'$ is increasing, then \ref{item:paweljohn3} fails for $f-k$ and $f'$, where
	\begin{equation*}
		k \coloneqq \min_{p \in \Delta(S)} \left[
		c'(p) + \sum_{s \in S} f(s) p(s)
		\right]
		-
		\min_{p \in \Delta(S)} \left[
		c'(p) + \sum_{s \in S} f'(s) p(s)
		\right] .
	\end{equation*}
	
	Finally, property~\ref{item:paweljohn4} holds if and only if $c$ is up-shifted from $c'$, by Proposition~9 in \textcite{DziewulskiQuah2024}.%
		\footnote{The statement of this proposition assumes that $c$ and $c'$ are finite-valued, but that assumption is not required for the proof. (The statement also assumes continuity, but that property is equivalent to lower semi-continuity given the other assumptions.)}
\end{proof}

\end{appendices}

%______________________________________________________________________________

%    ____  _ _     _ _                             _
%   | __ )(_) |__ | (_) ___   __ _ _ __ __ _ _ __ | |__  _   _
%   |  _ \| | '_ \| | |/ _ \ / _` | '__/ _` | '_ \| '_ \| | | |
%   | |_) | | |_) | | | (_) | (_| | | | (_| | |_) | | | | |_| |
%   |____/|_|_.__/|_|_|\___/ \__, |_|  \__,_| .__/|_| |_|\__, |
%                            |___/          |_|          |___/

% \pagebreak
\printbibliography[heading=bibintoc]

%______________________________________________________________________________

\end{document}